\newcommand\vldbdoi{XX.XX/XXX.XX}
\newcommand\vldbpages{XXX-XXX}
\newcommand\vldbvolume{16}
\newcommand\vldbissue{1}
\newcommand\vldbyear{2023}
\newcommand\vldbauthors{\authors}
\newcommand\vldbtitle{\shorttitle}
\newcommand\vldbavailabilityurl{URL_TO_YOUR_ARTIFACTS}
\newcommand\vldbpagestyle{plain}
\begin{document}

\title{An Improved Christofides Mechanism for Local Differential Privacy Framework}

\author{She Sun}
\affiliation{%
  \institution{AI Research Institute\\ Zhejiang Lab}
  \city{Hangzhou}
  \state{Zhejiang}
  \country{China}}
\email{sunshe@zhejianglab.com}

\author{Li Zhou}
\affiliation{%
  \institution{AI Research Institute\\ Zhejiang Lab}
  \city{Hangzhou}
  \state{Zhejiang}
  \country{China}}
\email{zhou.li@zhejianglab.com}

\author{Xiaoran Yan}
\authornote{*Corresponding author: yanxr@zhejianglab.com.}
\affiliation{%
  \institution{AI Research Institute\\ Zhejiang Lab}
  \city{Hangzhou}
  \state{Zhejiang}
  \country{China}}
\email{yanxr@zhejianglab.com}

\begin{abstract}

The development of Internet technology enables an analysis on the whole population rather than a certain number of samples, and leads to increasing requirement for privacy protection. Local differential privacy (LDP) is an effective standard of privacy measurement; however, its large variance of mean estimation causes challenges in application. To address this problem, this paper presents a new LDP approach, an improved Christofides mechanism.

It compared four statistical survey methods for conducting surveys on sensitive topics---modified Warner, Simmons, Christofides, and the improved Christofides mechanism. Specifically, Warner, Simmons and Christofides mechanisms have been modified to draw a sample from the population without replacement, to decrease variance. Furthermore, by drawing cards without replacement based on modified Christofides mechanism, we introduce a new mechanism called the improved Christofides mechanism, which is found to have the smallest variance  under certain assumption when using LDP as a measurement of privacy leakage. The assumption is do satisfied usually in the real world. Actually, we decrease the variance to $28.7\%$ of modified Christofides mechanism's variance in our experiment based on the HCOVANY dataset---a real world dataset of IPUMS USA. This means our method gets a more accurate estimate by using LDP as a measurement of privacy leakage. This is the first time the improved Christofides mechanism is proposed for LDP framework based on comparative analysis of four mechanisms using LDP as the same measurement of privacy leakage.

\end{abstract}

\maketitle

\pagestyle{\vldbpagestyle}
\begingroup\small\noindent\raggedright\textbf{PVLDB Reference Format:}\\
\vldbauthors. \vldbtitle. PVLDB, \vldbvolume(\vldbissue): \vldbpages, \vldbyear.\\
\href{https://doi.org/\vldbdoi}{doi:\vldbdoi}
\endgroup
\begingroup
\renewcommand\thefootnote{}\footnote{\noindent
This work is licensed under the Creative Commons BY-NC-ND 4.0 International License. Visit \url{https://creativecommons.org/licenses/by-nc-nd/4.0/} to view a copy of this license. For any use beyond those covered by this license, obtain permission by emailing \href{mailto:info@vldb.org}{info@vldb.org}. Copyright is held by the owner/author(s). Publication rights licensed to the VLDB Endowment. \\
\raggedright Proceedings of the VLDB Endowment, Vol. \vldbvolume, No. \vldbissue\ %
ISSN 2150-8097. \\
\href{https://doi.org/\vldbdoi}{doi:\vldbdoi} \\
}\addtocounter{footnote}{-1}\endgroup

\ifdefempty{\vldbavailabilityurl}{}{
\vspace{.3cm}
\begingroup\small\noindent\raggedright\textbf{PVLDB Artifact Availability:}\\
The source code, data, and/or other artifacts have been made available at \url{\vldbavailabilityurl}.
\endgroup
}

\section{Introduction}

Local differential privacy (LDP), proposed in~\cite{raskhodnikova2008can} as the first equivalent definition and further developed by Duchi et al.~\cite{duchi2013local}, is an algorithm which quantifies privacy by having users randomly perturb their data locally and send the perturbed data to a (possibly un-trusted) data collector. Since the data collector does not hold the original personal data, it is a strong privacy model~\cite{yang2020answering, wang2020locally, wang2018locally, cormode2021frequency, bao2021cgm, cao2018quantifying, xiang2020linear, asi2022optimal, bassily2017practical}. LDP has been adopted by many industry organizations, including Apple~\cite{team2017learning}, Google~\cite{erlingsson2014rappor}, and Microsoft~\cite{ding2017collecting}. For instance, Apple~\cite{team2017learning} deploys LDP on iOS to know popular health data types for future improvement in the Health app. Google~\cite{erlingsson2014rappor} has adopted RAPPOR to Chrome Web browser, collecting data about Chrome clients from approximately 14 million respondents who have opted to send usage statistics to Google. Microsoft~\cite{ding2017collecting} integrates LDP in Windows 10 to collect the number of seconds that a user has spent using a particular app.

In social science and epidemiologic research, surveying sensitive questions such as cheat, gamble, drug or alcohol abuse, or antisocial behavior, is likely to lead to refusals or untruthful answers. To address this issue, Warner et al.~\cite{warner1965randomized} proposed a procedure as randomized response, the most classical LDP mechanism, in 1965. Since then, many mechanisms~\cite{greenberg1969unrelated, mangat1990alternative, kuk1990asking, mangat1994improved, christofides2003generalized, gjestvang2006new, esponda2006negative} for surveying sensitive questions have been proposed, two iconic ones being~\cite{greenberg1969unrelated} and~\cite{christofides2003generalized}. Simmons et al.~\cite{greenberg1969unrelated} suggested in 1967 that the level of cooperation would increase if two unrelated questions (or statements) were used. Christofides et al.~\cite{christofides2003generalized} proposed a generalization randomized response technique in 2003, where users only have to answer with numbers instead of \textit{yes} or \textit{no}.

Waseda et al.~\cite{waseda2016analyzing} evaluated Warner, Kuk~\cite{kuk1990asking}, negative survey mechanism~\cite{esponda2006negative} and its variants~\cite{aoki2012limited} by using differential privacy. They showed that these mechanisms have a tradeoff between privacy and utility, but did not compare the performance of different mechanisms nor suggest ways to improve them. Giordano et al.~\cite{giordano2012efficiency} compared the variance of the estimators of three dichotomous unrelated question mechanisms (~\cite{greenberg1969unrelated, singh2003use, perri2008modified}) under equal levels of confidentiality measures introduced by Lanke~\cite{lanke1976degree}, Leysieffer and Warner~\cite{leysieffer1976respondent}, but also did not improve the performance of variance.

With the development of big data and artificial intelligence, people can collect data not by sampling, but by surveying the population, such as in U.S. Census 2020 and the US presidential election of 2020. And with that in mind, we modified three mechanisms (Warner, Simmons and Christodfides mechanisms) for conducting surveys on sensitive topics by drawing a sample from the population without replacement. Since drawing a sample from the population without replacement can reduce the randomness of the estimated mean in our case, we improved modified Christofides mechanism by drawing cards from device without replacement. The improvement is remarkable for the variance of the improved Christofides mechanism is the smallest when the proportion of people with sensitive attributes is small if we use LDP as the same measurement of privacy leakage. In fact, the variance decreases to 28.7\% that of modified Christofides mechanism in our experiment base on a read world dataset. As a result, we find a new LDP mechanism for mean estimation with smaller variance and a better degree of cooperation from data providers.

\textbf{Contributions}. Our contributions are threefold:

(1)We analyze modified Warner, Simmons and Christodfides mechanisms by drawing samples from the population without replacement.

(2)We propose an improved Christofides mechanism with a small but effective change to modified Christofides mechanism when the proportion of people with sensitive attributes is small.

(3)We compare the variance of modified Warner, Simmons, Christodfides and the improved Christofides mechanisms by theoretical analysis and numerical simulation using LDP as the same measurement of privacy leakage. We find that the improved Christofides mechanisms has the smallest variance when the proportion of people with sensitive attributes is small, and can be used as a better LDP mechanism to replace Warner mechanism.

\textbf{Organization.} Section 2 presents the background, including the definition of LDP and the procedures of Warner, Simmons and Christofides mechanisms. Section 3 presents the
procedures and variances of modified Warner, and Simmons and Christofides mechanisms by theoretical analysis. In section 4, we introduce the improved Christofides mechanism and analyze its variance. In section 5, we analyze the variance of modified Warner, Simmons and Christofides and the improved Christofides mechanism by using LDP as the measurement of privacy leakage. In section 6, we compare the variance and minimum sample size of four mechanisms by theoretical analysis and numerical simulation. In section 7, we give the conclusion.

\section{Background}
\subsection{Local Differential Privacy}
In the LDP setting, a user perturbs the private value $x$ using an algorithm $\mathcal{A}$ and sends $\mathcal{A}(x)$ to the untrusted collector. The collector learns statistical information about users.

\begin{definition}[$\varepsilon$-LDP~\cite{duchi2013local}]
A randomization mechanism $\mathcal{A}$ satisfies $\varepsilon$-LDP, if and only if for any pair of input values $x$, $x'$ and for all randomized output $O$, it holds that
\begin{equation*}
\mathbf{P} [\mathcal{A}(x)=O] \leq e^\varepsilon \times \mathbf{P}[\mathcal{A}(x')=O] \text {.}
\end{equation*}
The notation $\mathbf{P}$ means probability.
\end{definition}

\subsection{Three Randomized Response Mechanisms}
This section briefly introduces Warner Mechanism, Simmons mechanism and Christofides mechanism.

\subsubsection{Warner Mechanism~\cite{warner1965randomized}}

Suppose that each respondent in a population belongs to either Group $A$ or $\bar A$ and that it is required to estimate the proportion of Group $A$ by survey. Each respondent is required to respond $yes$ or $no$ to one of the two statements:

(a) I am a member of group $A$.

(b) I am a member of group $\bar A$.

The respondent responds to statement (a) with probability $p$ ($p<1/2$ in this paper.) and to statement (b) with probability $1-p$ using a random device, e.g., by the toss of a (biased) coin and in the absence of the interviewer. The investigator, such as the government or school, can get the unbiased estimate of the true proportion of people with sensitive attributes according to the answers.

\subsubsection{Simmons Mechanism~\cite{greenberg1969unrelated}}

It is noted that both statements of Warner mechanism are sensitive. In Simmons mechanism, we change the second statement of Warner mechanism to an unrelated one, which leads to a higher degree of cooperation for the respondent compared with that of Warner mechanism.

Simmons mechanism uses two statements:

(a) I am a member of Group $A$.

(b) I am a member of Group $B$.

, where the unrelated statement in Group $B$ is unsensitive. For instance, the two statements posed might be:

(a) I cheated in an exam.

(b) I was born in the first half of the year.

The investigator, such as the government or school, can get the unbiased estimator of the true proportion of people with sensitive attributes according to the answers.
\subsubsection{Christofides Mechanism~\cite{christofides2003generalized}}

In Christofides mechanism, every respondent is provided with a card which produces the integers $1$, $2$,...,$L$ with proportions $p_1$,$p_2$,...,$p_L$ respectively (These proportions are not all equal). The respondent reports how far away the integer is from $L+1$ if he/she has the sensitive attributes or from $0$ if he/she does not have it. For instance, suppose that $L$ is 3 and the respondent reports the number 3. This means that either the respondent has sensitive attributes and the drawn number is 1 or the respondent does not have sensitive attributes and the drawn number is 3. Notice that when $L=2$, Christofides mechanism is substantially equivalent to Warner mechanism.

The investigator, such as the government or school, can get the unbiased estimate of the true proportion of people with sensitive attributes according to the answers.

\section{Analyzing three Modified Mechanisms}
This section analyzes modified Warner, Simmons and Christofides mechanisms. We assume the whole procedure is completed by the respondent and unobserved by the investigator with all respondents being truthful in their answers.

In~\cite{warner1965randomized}~\cite{greenberg1969unrelated} and~\cite{christofides2003generalized}, the authors assumed a random sampling from the population with replacement. However, due to the development of the Internet, people can let the number of samples and populations be equal, such as in U.S. Census 2020~\cite{UnitedStatesCensusBureau} and the US presidential election, which means sampling without replacement.

Let $x_i$ be the integer representing the sensitive attributes of respondent $i$, which is 1 if the respondent is a member of sensitive group (persons with no health insurance coverage at the time of interview), and 0 if the respondent is not a member of sensitive group (persons with health insurance coverage at the time of interview in our experiment). We have

\begin{equation*}
x_i=
\begin{cases}
1, & {\rm respondent\ }i\in{\rm\ Group\ }A \\
0, & {\rm respondent\ }i\in{\rm\ Group\ }\bar A.
\end{cases}
\end{equation*}

\subsection{Analyzing Modified Warner Mechanism}

Algorithm \ref{algorithm: modified Warner Mechanism} illustrates modified Warner Mechanism. We select from the population of size $N$ a random sample without replacement of size $N$. Let $\pi_A$ be the true proportion of being a member of group $A$ in the population.

\renewcommand{\algorithmicrequire}{\textbf{Input:}}
\renewcommand{\algorithmicensure}{\textbf{Output:}}

\begin{algorithm}
\caption{modified Warner Mechanism}
\label{algorithm: modified Warner Mechanism}
\begin{algorithmic}[1]
        \Require $x_i(i=1,2,...,N), p$
        \Ensure $X_i$
        \State {Sample a Bernoulli variable $u$ such that $P(u = 1) = p$}
        \If {$u<p$}
            \State {$X_i = x_i$}
        \Else
             \State {$X_i=1-x_i$}
        \EndIf
        \State return $X_i$
        \State Experiment with the next respondent $i+1$.
\end{algorithmic}
\end{algorithm}

Introduce random variables

\begin{equation*}
X_{i}=
\begin{cases}
1, & \text { the } i \text { th respondent answers \textit{yes}} \\
0, & \text { the } i \text { th respondent answers \textit{no}}.
\end{cases}
\end{equation*}

Random variables $X_{1}$, $X_{2}$, $\ldots$, $X_{N}$ are dependent. The number of {\it yes} answers obtained from $N$ respondents is $N_{1}=\sum_{i=1}^{N} X_{i}$.

According to the formula of total probability, the theoretical proportion of the respondent who answered $yes$ to question (a) or (b) is approximately equal to ${N_{1}}/{N}$, shown as follows,
\begin{equation*}
p \pi_{A}+(1-p)(1-\pi_{A}) \approx \frac{N_{1}}{N}.\\
\end{equation*}

The parameter $\hat{\pi_A}$  is estimated based on the indirect responses of all respondents via the estimator

\begin{equation*}
\hat{\pi_A}=\frac{\frac{N_{1}}{N}-(1-p)}{2 p-1}.\\
\end{equation*}

The estimator $\hat{\pi_A}$  is unbiased estimation according to the Law of large numbers~\cite{durrett2019probability}.

The variance of $\hat{\pi_A}$~\cite{wang2017locally} is

\begin{equation}\label{equation: Warner variance}
\begin{aligned}
\mathbf{Var}\hat{\pi_A} & =\frac{\operatorname{Var}(\sum_{i=1}^N X_i)}{N^2(2 p-1)^2} \\
& =\frac{\pi_A p(1-p)+(1-\pi_A)(1-p) p}{N(2 p-1)^2} \\
& =\frac{p(1-p)}{N(2 p-1)^2}.
\end{aligned}
\end{equation}

The variance is decreased by $[\pi_A(1-\pi_A)]/N$ compared with that in~\cite{warner1965randomized} that samples with replacement, which is

\begin{equation*}
\mathbf{Var}\hat{\pi_A}=\frac{\pi_A(1-\pi_A)}{N}+\frac{p(1-p)}{N(2 p-1)^2}.
\end{equation*}

\subsection{Analyzing Modified Simmons Mechanism}

Algorithm \ref{algorithm: modified Simmons Mechanism} illustrates modified Simmons Mechanism. We select from the population of size $N$ a random sample without replacement of size $N$. Let $\pi_A$ be the true proportion of being a member of group A in the population.

\renewcommand{\algorithmicrequire}{\textbf{Input:}}
\renewcommand{\algorithmicensure}{\textbf{Output:}}

\begin{algorithm}
\caption{modified Simmons Mechanism}
\label{algorithm: modified Simmons Mechanism}
\begin{algorithmic}[1]
        \Require $x_i(i=1,2,...,N), p, \pi_B$
        \Ensure $X_i$
        \State {Sample a Bernoulli variable $u$ such that $P(u = 1) = p$}
        \If {$u<p$}
            \State {$X_i = x_i$}
        \Else
            \State {Sample a Bernoulli variable $v$ such that $P(v = 1) = \pi_B$}
            \If {$v<\pi_B$}
                \State {$X_i = 1$}
            \Else
                \State {$X_i = 0$}
            \EndIf
        \EndIf
        \State return $X_i$
        \State Experiment with the next respondent $i+1$.
\end{algorithmic}
\end{algorithm}

Introduce random variables

\begin{equation*}
X_{i}=
\begin{cases}
1, & \text { the } i \text { th respondent answers \textit{yes}} \\
0, & \text { the } i \text { th respondent answers \textit{no}}.
\end{cases}
\end{equation*}

Random variables $X_{1}$, $X_{2}$, $\ldots$, $X_{N}$ are dependent. The number of {\it yes} answers obtained from $N$ respondents is $N_{1}=\sum_{i=1}^{N} X_{i}$.

According to the formula of total probability, the theoretical proportion of the respondent who answered $yes$ to question (a) or (b) is approximately equal to ${N_{1}}/{N}$, shown as follows,

\begin{equation*}
p \pi_{A}+(1-p) \pi_{B} \approx \frac{N_{1}}{N}.\\
\end{equation*}

The parameter ${\pi}_{A}$ is estimated based on the indirect responses of all respondents via the estimator

\begin{equation*}
\hat{\pi_A}=\frac{\frac{N_{1}}{N}-(1-p) \pi_{B}}{p}.\\
\end{equation*}

The estimator $\hat{\pi_A}$  is an unbiased estimation according to the Law of large numbers~\cite{durrett2019probability}.

The variance of $\hat{\pi_A}$ is

\begin{equation}\label{equation: Simmons mechanism variance}
\begin{aligned}
\mathbf{Var}\hat{\pi_A}= & \frac{\mathbf{Var}\left(\frac{1}{N} \sum_{i=1}^N X_i\right)}{p^2} \\
= & \frac{\pi_A\left[p+(1-p) \pi_B\right]\left[1-p-(1-p) \pi_B\right]}{N p^2}+ \\
& \frac{\left(1-\pi_A\right)\left[(1-p) \pi_B\right]\left[1-(1-p) \pi_B\right]}{N p^2} \\
= & \frac{\pi_B(1-p)-\pi_B^2(1-p)^2}{N p^2}+ \\
& \pi_A \frac{1-p-2 \pi_B(1-p)}{N p}.
\end{aligned}
\end{equation}

The variance is decreased by $[\pi_A(1-\pi_A)]/N$ compared with that in~\cite{greenberg1969unrelated} which samples with replacement,

\begin{equation*}
\begin{aligned}
\mathbf{Var}\hat{\pi_A}= & \frac{\pi_A\left(1-\pi_A\right)}{N}+\frac{\pi_B(1-p)-\pi_B^2(1-p)^2}{N p^2}+ \\
& \pi_A \frac{1-p-2 \pi_B(1-p)}{N p}.
\end{aligned}
\end{equation*}

\subsection{Analyzing Modified Christofides Mechanism}

Algorithm \ref{algorithm: modified Christofides mechanism} illustrates modified Christofides Mechanism. We draw a sample of size $N$ from the population of size $N$ without replacement. To formulate the algorithm mathematically, prepare $M$ cards, which produce integers $1$, $2$, ..., $L$ with proportions $p_1$, $p_2$, ..., $p_L$ respectively (These proportions are not all equal). Let $\pi_A$ be the true proportion of being a member of group $A$ in the population.

\renewcommand{\algorithmicrequire}{\textbf{Input:}}
\renewcommand{\algorithmicensure}{\textbf{Output:}}

\begin{algorithm}
\caption{modified Christofides Mechanism}
\label{algorithm: modified Christofides mechanism}
\begin{algorithmic}[1]
        \Require $x_i(i=1, 2, ..., N), L, p_1, p_2, ..., p_L$
        \Ensure $X_i$
        \State {prepare a device consist of $M$ cards, each card showing one of the integers $1,2,...,L$ with proportions $p_1,p_2,...,p_L$ respectively;}
        \State {randomly select a card, assume the number is $k$;}
        \If {$x_i=1$}
            \State {$X_i = k$;}
        \Else
             \State {$X_i=L+1-k$}
        \EndIf
        \State return $X_i$;
        \State Put the card back to device.
        \State Experiment with the next respondent $i+1$.
\end{algorithmic}
\end{algorithm}

For convenience, we introduce the following notations: $Y_i$ is the number he/she draws and $X_i$ is the number he/she answers.

We have

\begin{equation*}
X_{i}=
\begin{cases}
1, & \text { the } i \text { th respondent answers \textit{yes}} \\
0, & \text { the } i \text { th respondent answers \textit{no}}
\end{cases}
\end{equation*}

, where

\begin{equation*}
X_i=\left(L+1-Y_i\right) x_i+Y_i\left(1-x_i\right).
\end{equation*}

The expectation of $X$ is

\begin{equation*}
\mathbf{E} X=\mathbf{E} Y+\pi(L+1-2 \mathbf{E} Y).
\end{equation*}

Thus the estimation of $\pi_A$ is

\begin{equation*}
\hat{\pi}_A=\frac{\frac{1}{N} \sum_{i=1}^N X_i-\mathbf{E}(Y)}{L+1-2 \mathbf{E}(Y)}.
\end{equation*}

The estimator $\hat{\pi_A}$  is an unbiased estimation according to the Law of large numbers~\cite{durrett2019probability}.

\begin{theorem}[Modified Christofides mechanism variance]\label{theorem: Modified Christofides mechanism variance}
The variance of $\hat{\pi_A}$ is
\begin{equation*}
\begin{aligned}
\mathbf{Var}\hat{\pi_A} & =\frac{\mathbf{Var}\left(\sum_{i=1}^N X_i\right)}{N^2(L+1-2 \mathbf{E}(Y))^2} \\
& =\frac{\mathbf{Var}Y}{N(L+1-2 \mathbf{E}(Y))^2}.
\end{aligned}
\end{equation*}
\end{theorem}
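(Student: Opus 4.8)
The plan is to compute $\mathbf{Var}\left(\sum_{i=1}^N X_i\right)$ directly and show that, unlike the Warner and Simmons cases, the covariance terms contribute nothing here, so that the sum behaves as if the $X_i$ were independent. First I would recall that each respondent draws a card independently and replaces it (line 8 of Algorithm~\ref{algorithm: modified Christofides mechanism}), so the draws $Y_1,\dots,Y_N$ are i.i.d.; the only source of dependence among $X_1,\dots,X_N$ is that the sensitive bits $x_1,\dots,x_N$ come from sampling the finite population of size $N$ without replacement, which forces $\sum_i x_i = N\pi_A$ to be a fixed constant. Using $X_i=(L+1-Y_i)x_i+Y_i(1-x_i)=Y_i+(L+1-2Y_i)x_i$, I would write $\sum_{i=1}^N X_i=\sum_{i=1}^N Y_i+(L+1)\sum_{i=1}^N x_i-2\sum_{i=1}^N x_iY_i$.

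Next I would exploit that $\sum_i x_i$ is deterministic: the middle term is a constant and drops out of the variance. So $\mathbf{Var}\left(\sum_i X_i\right)=\mathbf{Var}\left(\sum_i Y_i - 2\sum_i x_iY_i\right)=\mathbf{Var}\left(\sum_i (1-2x_i)Y_i\right)$. Since the $Y_i$ are independent and the $x_i$ are fixed once the (full-population) sample is drawn, this equals $\sum_{i=1}^N (1-2x_i)^2\,\mathbf{Var}Y_i$. Because $x_i\in\{0,1\}$ we have $(1-2x_i)^2=1$ for every $i$, and because the $Y_i$ are identically distributed, $\mathbf{Var}Y_i=\mathbf{Var}Y$ for all $i$. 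Hence $\mathbf{Var}\left(\sum_{i=1}^N X_i\right)=N\,\mathbf{Var}Y$. Substituting into $\mathbf{Var}\hat{\pi_A}=\mathbf{Var}\left(\sum_{i=1}^N X_i\right)/\bigl(N^2(L+1-2\mathbf{E}(Y))^2\bigr)$, which is immediate from the definition of $\hat{\pi}_A$ and the fact that $L+1-2\mathbf{E}(Y)$ is a nonzero constant, yields $\mathbf{Var}\hat{\pi_A}=\mathbf{Var}Y/\bigl(N(L+1-2\mathbf{E}(Y))^2\bigr)$, as claimed.

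The main obstacle — really the one conceptual point — is justifying that the without-replacement sampling contributes no variance here, in contrast to the Warner and Simmons analyses where a term $-\pi_A(1-\pi_A)/N$ appeared. The resolution is exactly the observation above: in those mechanisms the randomness lives in coin tosses applied to the reported bits, and removing the population randomness removes a $\pi_A(1-\pi_A)$ term; but in the Christofides mechanism the deterministic quantity $\sum_i x_i$ enters $\sum_i X_i$ only through the constant $(L+1)\sum_i x_i$, while the multiplier $(1-2x_i)$ of the random part $Y_i$ is squared to $1$ regardless of $x_i$. I would state this explicitly as the crux of the argument. A minor technical caveat worth a sentence: the displayed identity $\mathbf{E}X=\mathbf{E}Y+\pi(L+1-2\mathbf{E}Y)$ and the unbiasedness are asserted in the "law of large numbers" sense in the surrounding text, so I would be careful to phrase the variance computation as the exact second moment of the estimator $\hat{\pi}_A$ built from $N_1=\sum_i X_i$, which is what the theorem statement's first line already commits to.
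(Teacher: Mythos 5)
Your proof is correct and follows essentially the same route as the paper's Appendix~A: both reduce $\mathbf{Var}\left(\sum_{i=1}^N X_i\right)$ to a sum of $N$ independent per-respondent variances (using that the card draws are i.i.d.\ because cards are replaced, and that $\sum_i x_i = N\pi_A$ is deterministic because the whole population is sampled), each equal to $\mathbf{Var}Y$. The only difference is in how that last equality is established: the paper expands the variance of the reflected draw $L+1-Y$ and cancels the $\pi_A$-dependent terms through several lines of algebra, whereas your observation that $X_i=(L+1)x_i+(1-2x_i)Y_i$ with $(1-2x_i)^2=1$ reaches the same conclusion in one step --- a cleaner execution of the same idea.
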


\begin{proof}
\textit{The proof is in Appendix A.}
\end{proof}

The variance is decreased by $[\pi_A(1-\pi_A)]/N$ compared with that in~\cite{christofides2003generalized} which samples with replacement,

\begin{equation*}
\mathbf{Var}\hat{\pi_A}=\frac{\pi_A\left(1-\pi_A\right)}{N}+\frac{\operatorname{Var} Y}{N(L+1-2 \mathbf{E}(Y))^2}.
\end{equation*}

By modifying Warner, Simmons and Christofides by sampling without replacement, the variance is decreased by $[\pi_A(1-\pi_A)]/N$ compared with that in~\cite{warner1965randomized}~\cite{greenberg1969unrelated} and~\cite{christofides2003generalized}, which samples with replacement. This is our first contribution.

\section{the Improved Christofides Mechanism}

Since drawing a random sample from the population without replacement can reduce the randomness of the estimated mean in our case. What will happen if respondent draws cards without replacement? To clarify this issue, we proposed the improved Christofides mechanism, which is illustrated as Algorithm \ref{algorithm: the improved Christofides Mechanism}.

\renewcommand{\algorithmicrequire}{\textbf{Input:}}
\renewcommand{\algorithmicensure}{\textbf{Output:}}

\begin{algorithm}
\caption{the improved Christofides Mechanism}
\label{algorithm: the improved Christofides Mechanism}
\begin{algorithmic}[1]
        \Require $x_i(i=1, 2, ..., N), L, p_1, p_2, ..., p_L$
        \Ensure $X_i$
        \State {prepare a device consist of $N$ cards, each card showing one of the integers $1,2,...,L$ with proportions $p_1,p_2,...,p_L$ respectively;}
        \State {randomly select a card, assume the number is $k$;}
        \If {$x_i=1$}
            \State {$X_i = k$;}
        \Else
             \State {$X_i=L+1-k$}
        \EndIf
        \State return $X_i$;
        \State The respondent keeps the card. (Do not put the card back to device.)
        \State Experiment with the next respondent $i+1$.
\end{algorithmic}
\end{algorithm}

We select from the population of size $N$ a sample without replacement of size $N$. To formulate the algorithm mathematically, prepare $N$ cards, which produce integers $1$, $2$, ..., $L$ with proportions $p_1$, $p_2$, ..., $p_L$ respectively (These proportions are not all equal).

For convenience, we introduce the following notations: $Y_i$ is the number he/she draws and $X_i$ is the number he/she answers.

We have

\begin{equation*}
X_{i}=
\begin{cases}
1, & \text { the } i \text { th respondent answers \textit{yes}} \\
0, & \text { the } i \text { th respondent answers \textit{no}},
\end{cases}
\end{equation*}

where

\begin{equation*}
X_i=\left(L+1-Y_i\right) x_i+Y_i\left(1-x_i\right).
\end{equation*}

The expectation of $X$ is

\begin{equation*}
\mathbf{E} X=\mathbf{E} Y+\pi(L+1-2 \mathbf{E} Y).
\end{equation*}

Thus the estimation of $\pi_A$ is

\begin{equation*}
\hat{\pi}_A=\frac{\frac{1}{N} \sum_{i=1}^N X_i-\mathbf{E}(Y)}{L+1-2 \mathbf{E}(Y)}.
\end{equation*}

The estimator $\hat{\pi_A}$  is an unbiased estimation according to the Law of large numbers~\cite{durrett2019probability}.

\begin{theorem}[the improved Christofides mechanism variance]\label{theorem: the improved Christofides mechanism variance}
The variance of $\hat{\pi_A}$ is
\begin{equation}
\begin{aligned}
\mathbf{Var}\hat{\pi_A} & =\frac{\mathbf{Var}\left(\sum_{i=1}^N X_i\right)}{N^2(L+1-2 \mathrm{E}(Y))^2} \\
& =\frac{4 \pi_A\left(1-\pi_A\right) \mathbf{Var}(Y)}{(N-1)(L+1-2 \mathbf{E}(Y))^2}.
\end{aligned}
\end{equation}
\end{theorem}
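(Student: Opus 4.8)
The key observation is that, since the estimator differs from the modified‐Christofides case only through the joint law of $X_1,\dots,X_N$, it suffices to compute $\mathbf{Var}\bigl(\sum_{i=1}^N X_i\bigr)$ under the no‑replacement card draw and divide by $N^2(L+1-2\mathbf{E}(Y))^2$. Expanding the variance of the sum gives $\sum_i \mathbf{Var}(X_i) + \sum_{i\ne j}\mathbf{Cov}(X_i,X_j)$, so the plan is to evaluate these two pieces separately. First I would note that each individual card $Y_i$ still has the marginal distribution with proportions $p_1,\dots,p_L$ (exchangeability of sampling without replacement), so $\mathbf{E}(X_i)$ and $\mathbf{Var}(X_i)$ are exactly as in Theorem~\ref{theorem: Modified Christofides mechanism variance}; in particular $\sum_i \mathbf{Var}(X_i) = N\,\mathbf{Var}(Y)$ only if the $x_i$ were all equal, so I must instead carry the $x_i$‑dependence explicitly. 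Writing $X_i = Y_i + x_i(L+1-2Y_i)$ and using $\mathbf{E}(X_i) = \mathbf{E}(Y) + \pi_A(L+1-2\mathbf{E}(Y))$ after averaging, the term $\sum_i\mathbf{Var}(X_i)$ contributes a piece proportional to $\mathbf{Var}(Y)$ plus a piece coming from the spread of the $x_i$'s.

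The heart of the computation is the covariance term. Here I would use the standard identity for sampling without replacement from a finite population: for $i\ne j$, $\mathbf{Cov}(Y_i,Y_j) = -\dfrac{1}{N-1}\,\mathbf{Var}(Y)$, because the $N$ cards are a fixed multiset and the $Y_i$ are obtained by a uniform random permutation. Substituting $X_i = Y_i + x_i(L+1-2Y_i)$ into $\mathbf{Cov}(X_i,X_j)$ and expanding bilinearly, every cross term reduces to a multiple of $\mathbf{Cov}(Y_i,Y_j)$ with coefficients depending on $x_i,x_j$; summing over the $N(N-1)$ ordered pairs $i\ne j$ and using $\sum_{i\ne j} 1 = N(N-1)$, $\sum_{i\ne j} x_i = (N-1)\sum_i x_i$, $\sum_{i\ne j} x_ix_j = (\sum_i x_i)^2 - \sum_i x_i^2 = (\sum_i x_i)^2 - \sum_i x_i$ (since $x_i\in\{0,1\}$), the $\dfrac{1}{N-1}$ factors cancel against the $(N-1)$'s. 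Writing $\sum_i x_i = N\pi_A$ (exact, since we survey the whole population), the algebra should collapse: the $\mathbf{Var}(Y)$ contributions from $\sum_i\mathbf{Var}(X_i)$ and from the covariance sum combine, the terms linear in $\pi_A$ and the $\pi_A^2$ terms reorganize into $\pi_A(1-\pi_A)$, and after dividing by $N^2(L+1-2\mathbf{E}(Y))^2$ one is left with $\dfrac{4\pi_A(1-\pi_A)\mathbf{Var}(Y)}{(N-1)(L+1-2\mathbf{E}(Y))^2}$.

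The step I expect to be the main obstacle — and the place to be most careful — is keeping the bookkeeping straight between the population‑level sums $\sum_i x_i = N\pi_A$ and the per‑respondent randomness of $Y_i$, since both the $\{x_i\}$ configuration (deterministic, but varying with $i$) and the card permutation (random) feed into $\mathbf{Var}(\sum X_i)$, and it is easy to double‑count the $-\tfrac{1}{N-1}\mathbf{Var}(Y)$ covariance or to mishandle the coefficient $(L+1-2Y_i)$, which is itself random. A clean way to organize it is to condition on the card permutation being fixed: then $\sum_i X_i = \sum_i Y_i + \sum_i x_i(L+1-2Y_i)$, but $\sum_i Y_i$ is a constant (the total of the fixed multiset) and $\sum_i Y_i = $ const, so the only randomness in $\sum_i X_i$ comes through $\sum_i x_i(L+1) - 2\sum_i x_i Y_i = N\pi_A(L+1) - 2\sum_{i: x_i=1} Y_i$. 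Hence $\mathbf{Var}(\sum_i X_i) = 4\,\mathbf{Var}\!\bigl(\sum_{i: x_i=1} Y_i\bigr)$, i.e. four times the variance of the sum of $N\pi_A$ cards drawn without replacement from the $N$‑card population, which by the finite‑population formula equals $4\cdot N\pi_A\cdot\dfrac{N-N\pi_A}{N-1}\cdot\mathbf{Var}(Y) = \dfrac{4N^2\pi_A(1-\pi_A)}{N-1}\mathbf{Var}(Y)$; dividing by $N^2(L+1-2\mathbf{E}(Y))^2$ gives the claim. I would present the argument in this conditioned form, as it makes the finite‑population variance formula do all the work and avoids the messy double sum entirely.
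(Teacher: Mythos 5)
Your final argument is correct and reaches the stated formula, but it takes a genuinely different and considerably cleaner route than the paper. The paper's proof (Appendix~B) is a brute-force second-moment computation: it expands $\mathbf{Var}\bigl(\sum_i X_i\bigr)=\sum_i\mathbf{E}(X_i^2)+\sum_{i\neq j}\mathbf{E}(X_iX_j)-\bigl(\mathbf{E}\sum_i X_i\bigr)^2$, treats both the attributes $x_i$ and the cards $Y_i$ as sampled without replacement (using $\mathbf{E}(x_ix_j)=C_{N\pi_A}^2/C_N^2$ and $\mathbf{E}(Y_iY_j)=(\mathbf{E}Y)^2-\tfrac{1}{N-1}\mathbf{Var}Y$), and grinds through the algebra. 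Your first two paragraphs sketch essentially that same computation, but the decisive observation in your last paragraph bypasses it entirely: because all $N$ cards are dealt out, $\sum_i Y_i$ is a deterministic constant, so $\sum_i X_i=\mathrm{const}+N\pi_A(L+1)-2\sum_{i:x_i=1}Y_i$ and hence $\mathbf{Var}\bigl(\sum_i X_i\bigr)=4\,\mathbf{Var}\bigl(\sum_{i:x_i=1}Y_i\bigr)$, which the classical finite-population formula evaluates as $4N\pi_A\cdot\frac{N-N\pi_A}{N-1}\mathbf{Var}(Y)=\frac{4N^2\pi_A(1-\pi_A)}{N-1}\mathbf{Var}(Y)$; dividing by $N^2(L+1-2\mathbf{E}Y)^2$ gives the claim. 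This buys a one-line derivation in which the single nontrivial input is the standard sampling-without-replacement variance, and it makes transparent why the answer is proportional to $\pi_A(1-\pi_A)$, which is opaque in the paper's expansion. Two small points to tighten: (i) the phrase ``condition on the card permutation being fixed'' is misleading --- there is no conditioning, only the observation that $\sum_i Y_i$ is invariant under the permutation; and (ii) you treat the $x_i$ as deterministic with $\sum_i x_i=N\pi_A$ while the paper models them as a without-replacement sample --- these coincide here because the entire population of size $N$ is surveyed, but it is worth saying so explicitly. Your first two paragraphs can be dropped; the last one is the proof.
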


\begin{proof}
\textit{The proof is in Appendix B.}
\end{proof}

Since the estimation of $\pi_A$ for modified Christofides mechanisms and the improved Christofides mechanism are all unbiased estimations, we compare the utility of the two mechanisms by calculating the difference between the variance of modified Christofides ($\mathbf{Var}_{IC}$) and the improved Christofides mechanism ($\mathbf{Var}_{MC}$).

\begin{theorem}[Comparison Modified Christofides and the Improved Christofides Mechanisms by Theoretical Analysis]\label{theorem: Comparison Modified Christofides and the Improved Christofides Mechanisms by Theoretical Analysis}

The comparison can be summarized as follows:

(\romannumeral1.) When proportion $\pi_A$ is in interval $\left(0, 1/2-1/2\sqrt{N}\right) \cup(1/2+1/2\sqrt{N}, 1)$, we get that $\mathbf{Var_{IC}}$ is smaller than $\mathbf{Var_{MC}}$.

(\romannumeral2.) When proportion $\pi_A$ is in interval $(1/2-1/2\sqrt{N}, 1/2+1/2$\\$\sqrt{N})$, we get that $\mathbf{Var_{IC}}$ is larger than $\mathbf{Var_{MC}}$.

\end{theorem}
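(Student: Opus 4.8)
The plan is to subtract the two closed-form variances supplied by Theorems~\ref{theorem: Modified Christofides mechanism variance} and~\ref{theorem: the improved Christofides mechanism variance} and read off the sign of the difference. Both expressions share the common factor
\begin{equation*}
C:=\frac{\mathbf{Var}(Y)}{\bigl(L+1-2\mathbf{E}(Y)\bigr)^{2}},
\end{equation*}
which is strictly positive: $\mathbf{Var}(Y)>0$ because the card proportions $p_1,\dots,p_L$ are not all equal (so $Y$ is non-degenerate), and $L+1-2\mathbf{E}(Y)\neq 0$ since otherwise the estimator $\hat{\pi}_A$ would not be defined. Hence $\mathbf{Var_{MC}}=C/N$ and $\mathbf{Var_{IC}}=4\pi_A(1-\pi_A)\,C/(N-1)$, and (taking $N\ge 2$, so that $\mathbf{Var_{IC}}$ itself is well defined)
\begin{equation*}
\mathbf{Var_{MC}}-\mathbf{Var_{IC}}=C\left(\frac{1}{N}-\frac{4\pi_A(1-\pi_A)}{N-1}\right)=\frac{C}{N(N-1)}\Bigl[(N-1)-4N\pi_A(1-\pi_A)\Bigr].
\end{equation*}

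Since the prefactor $C/\bigl(N(N-1)\bigr)$ is positive, the sign of $\mathbf{Var_{MC}}-\mathbf{Var_{IC}}$ equals the sign of $g(\pi_A):=(N-1)-4N\pi_A(1-\pi_A)$. The next step is to complete the square:
\begin{equation*}
g(\pi_A)=N\bigl(4\pi_A^{2}-4\pi_A+1\bigr)-1=N(2\pi_A-1)^{2}-1.
\end{equation*}
From here the conclusion is immediate: $g(\pi_A)>0\iff(2\pi_A-1)^{2}>1/N\iff\lvert\pi_A-\tfrac12\rvert>\tfrac{1}{2\sqrt N}$, i.e.\ $\pi_A\in\bigl(0,\tfrac12-\tfrac{1}{2\sqrt N}\bigr)\cup\bigl(\tfrac12+\tfrac{1}{2\sqrt N},1\bigr)$, on which $\mathbf{Var_{IC}}<\mathbf{Var_{MC}}$ (case~(\romannumeral1)); the reversed strict inequality $g(\pi_A)<0$ holds exactly on the complementary open interval $\bigl(\tfrac12-\tfrac{1}{2\sqrt N},\tfrac12+\tfrac{1}{2\sqrt N}\bigr)$, on which $\mathbf{Var_{IC}}>\mathbf{Var_{MC}}$ (case~(\romannumeral2)); and $g(\pi_A)=0$ exactly at the endpoints $\pi_A=\tfrac12\pm\tfrac{1}{2\sqrt N}$, where the two variances coincide. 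One should also note that $\tfrac12+\tfrac{1}{2\sqrt N}<1$ whenever $N\ge 2$, so the intervals in case~(\romannumeral1) are non-empty.

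I do not expect a substantive obstacle here: the whole argument is a one-line subtraction followed by completing the square. The only points requiring care are (a) checking that the shared factor $C$ is strictly positive, so that cancelling it neither flips nor annihilates the inequality, and (b) bookkeeping strict versus non-strict inequalities, so that the two boundary proportions $\tfrac12\pm\tfrac{1}{2\sqrt N}$ --- at which the mechanisms have identical variance --- are correctly placed outside both cases.
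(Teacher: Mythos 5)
Your proposal is correct and follows essentially the same route as the paper: subtract the two closed-form variances, cancel the common factor $\mathbf{Var}(Y)/(L+1-2\mathbf{E}Y)^2$, and determine the sign of the resulting quadratic in $\pi_A$ (the paper factors it into its roots $\tfrac12\pm\tfrac{1}{2\sqrt N}$ where you complete the square, which is the same computation). Your added checks on the positivity of the common factor and the behaviour at the boundary points are sound but minor refinements of the paper's argument.
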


\begin{proof}

\begin{equation*}
\begin{aligned}
& \mathbf{Var_{IC}}\hat{\pi_A}-\mathbf{Var_{MC}}\hat{\pi_A}\\
& =\left[\frac{4 \pi_A\left(1-\pi_A\right)}{N-1}-\frac{1}{N}\right] \frac{\mathbf{Var} Y}{(L+1-2 \mathbf{E} Y)^2} \\
& =\left[\frac{-4 N \pi_A^2+4 N \pi_A-N+1}{N(N-1)}\right] \frac{\mathbf{V a r} Y}{(L+1-2 \mathbf{E} Y)^2} \\
& =-\frac{4(1/2+{1}/{2\sqrt{N}}-\pi_A)(1/2-{1}/{2\sqrt{N}}-\pi_A) \mathbf{Var} Y}{(N-1)(L+1-2 \mathbf{E} Y)^2}.
\end{aligned}
\end{equation*}

As long as $\pi_A<1/2-{1}/{2\sqrt{N}}$ or $\pi_A>1/2+{1}/{2\sqrt{N}}$, we have

\begin{equation*}
\begin{aligned}
& \mathbf{Var}_{IC}\hat{\pi_A}<\mathbf{Var}_{MC}\hat{\pi_A}.\qedhere
\end{aligned}
\end{equation*}
\end{proof}

Because $N$ is a large number, the interval $[1/2-{1}/{2\sqrt{N}}, 1/2+{1}/{2\sqrt{N}}]$ is small. For example, when $N=10^4$, the interval is $[0.495, 0.505]$. Especially, since ${1}/{2\sqrt{N}} \leq 1/4$(for $N \geq 3)$, as long as $\pi_A<0.25$ or $\pi_A>0.75$, we have $\mathbf{Var_{IC}}<\mathbf{Var_{MC}}$ for any $N(N>3)$.

Moreover, we will prove in section \ref{section: Comparing four Mechanisms using LDP} that the variance of the improved Christofides mechanism is smaller than that of modified Warner and Simmons mechanisms under certain assumption, in using LDP as a measurement of privacy leakage. The assumption is do satisfied usually in the real world. We improved Christofides mechanism by sampling cards without replacement. This is our second contribution.

\section{Analyzing four Mechanisms Variances using LDP}

In this section, we analyzed the variance of four mechanisms by theoretical analysis using LDP as the same measurement of privacy leakage.

\subsection{Modified Warner Mechanism using LDP}

\theoremstyle{plain}
\begin{theorem}[Warner mechanism]\label{theorem: Warner mechanism}

Modified Warner mechanism satisfies $\varepsilon-$differential privacy, where $\varepsilon=\ln [(1-p)/p]$ (without losing of generality, suppose that $p<1/2$.).
\end{theorem}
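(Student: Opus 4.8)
The plan is to verify the definition of $\varepsilon$-LDP directly from Algorithm~\ref{algorithm: modified Warner Mechanism}. The key observation is that the only source of randomness in the per-respondent randomizer is the Bernoulli variable $u$ with $\mathbf{P}(u=1)=p$, so for every input $x\in\{0,1\}$ the output $X_i$ is supported on the two-element set $\{0,1\}$; hence it suffices to bound the likelihood ratio over the finitely many pairs $(x,x')\in\{0,1\}^2$ and outputs $O\in\{0,1\}$.

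First I would tabulate the four conditional probabilities. When the true value is $x=1$ the respondent reports $X_i=x_i=1$ with probability $p$ and $X_i=1-x_i=0$ with probability $1-p$; symmetrically, when $x=0$ the respondent reports $0$ with probability $p$ and $1$ with probability $1-p$. Thus $\mathbf{P}[\mathcal{A}(x)=O]\in\{p,\,1-p\}$ in all four cases, equal to $p$ exactly when $O=x$ and to $1-p$ when $O\neq x$.

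Next I would compute the worst-case ratio $\mathbf{P}[\mathcal{A}(x)=O]/\mathbf{P}[\mathcal{A}(x')=O]$. If $x=x'$ the ratio is $1$; the extremal case is $x\neq x'$, where for each fixed $O$ the ratio equals either $p/(1-p)$ or $(1-p)/p$. Since we assume $p<1/2$ we have $(1-p)/p>1>p/(1-p)$, so the supremum over all admissible triples $(x,x',O)$ is $(1-p)/p$. Setting $\varepsilon=\ln[(1-p)/p]$ then gives $\mathbf{P}[\mathcal{A}(x)=O]\le e^{\varepsilon}\,\mathbf{P}[\mathcal{A}(x')=O]$ for all $x,x',O$, which is exactly the condition in the definition of $\varepsilon$-LDP.

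Finally, I would add a short remark on why the ``modification'' (drawing the sample from the population without replacement) leaves this bound untouched: LDP constrains the local randomizer applied to each individual's value, and each respondent still runs precisely the Warner randomizer on $x_i$; the without-replacement sampling governs only how the population is enumerated, not the per-user mechanism, so it is irrelevant to the privacy analysis. There is essentially no hard step here --- the only point requiring care is invoking the convention $p<1/2$ so that the maximizing ratio is $(1-p)/p$ rather than its reciprocal; without that normalization one would simply write $\varepsilon=\lvert\ln[(1-p)/p]\rvert$.
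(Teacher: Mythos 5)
Your proposal is correct and follows essentially the same route as the paper: both enumerate the four conditional probabilities $p$ and $1-p$ for the two inputs and two outputs, and take the worst-case ratio $(1-p)/p$ under the convention $p<1/2$. Your added remark that the without-replacement sampling is irrelevant to the per-user randomizer is a useful clarification the paper leaves implicit, but it does not change the argument.
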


\begin{proof} We consider four cases:

Case 1: The respondent has sensitive attributes, and the probability of answering \textit{yes} under this condition is $p$.

Case 2: The respondent has sensitive attributes, and the probability of answering \textit{no} under this condition is $1-p$.

Case 3: The respondent does not have sensitive attributes, and the probability of answering \textit{yes} under this condition is $1-p$.

Case 4: The respondent does not have sensitive attributes, and the probability of answering \textit{no} under this condition is $p$.

For any two respondents with the same answer, the maximum ratio of proportions is $(1-p)/p$.
\end{proof}

Substitute $\varepsilon=\ln [(1-p)/p]$ and $p<{1}/{2}$ into equation (\ref{equation: Warner variance}). The variance $\mathbf{Var_{MW}}$ can be written as

\begin{equation*}
\mathbf{Var}\hat{\pi_A}=\frac{e^{\varepsilon}}{N\left(e^{\varepsilon}-1\right)^{2}}.
\end{equation*}

\subsection{Modified Simmons Mechanism using LDP}

\begin{theorem}[Simmons mechanism]\label{theorem: Simmons mechanism}
Modified Simmons mechanism satisfies $\varepsilon-$differential privacy, where
\begin{equation}\label{equation: Simmons mechanism epsilon}
\varepsilon=\begin{cases}
	\ln \frac{p+(1-p) \pi_{B}}{(1-p) \pi_{B}} &, \pi_{B} \leq \frac{1}{2}\\
	\ln \frac{p+(1-p)\left(1-\pi_{B}\right)}{(1-p)\left(1-\pi_{B}\right)} &, \pi_{B}>\frac{1}{2}.
\end{cases}
\end{equation}
\end{theorem}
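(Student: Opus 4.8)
The plan is to follow the same finite case analysis used for the modified Warner mechanism in Theorem~\ref{theorem: Warner mechanism}, but now keeping track of the extra randomness introduced by the unrelated statement $B$. Since in Algorithm~\ref{algorithm: modified Simmons Mechanism} the respondent reports $x_i$ with probability $p$ and an independent Bernoulli$(\pi_B)$ draw with probability $1-p$, the channel $\mathcal{A}$ is described completely by four conditional probabilities:
\begin{align*}
\mathbf{P}[\mathcal{A}(1)=\textit{yes}] &= p+(1-p)\pi_B, \\
\mathbf{P}[\mathcal{A}(1)=\textit{no}] &= (1-p)(1-\pi_B), \\
\mathbf{P}[\mathcal{A}(0)=\textit{yes}] &= (1-p)\pi_B, \\
\mathbf{P}[\mathcal{A}(0)=\textit{no}] &= p+(1-p)(1-\pi_B).
\end{align*}
Because the output alphabet is only $\{\textit{yes},\textit{no}\}$, checking $\varepsilon$-LDP reduces to computing, for each output $O$, the ratio of the larger to the smaller of $\mathbf{P}[\mathcal{A}(1)=O]$ and $\mathbf{P}[\mathcal{A}(0)=O]$, and then setting $e^{\varepsilon}$ equal to the maximum of these two ratios, exactly as in the definition of $\varepsilon$-LDP.

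The second step is the elementary observation that for the output \textit{yes} the dominant probability is the one conditioned on $x=1$, so the relevant ratio is
\begin{equation*}
R_{\textit{yes}}=\frac{p+(1-p)\pi_B}{(1-p)\pi_B}=1+\frac{p}{(1-p)\pi_B},
\end{equation*}
while for the output \textit{no} the dominant probability is conditioned on $x=0$, giving
\begin{equation*}
R_{\textit{no}}=\frac{p+(1-p)(1-\pi_B)}{(1-p)(1-\pi_B)}=1+\frac{p}{(1-p)(1-\pi_B)}.
\end{equation*}
Comparing the two, $R_{\textit{yes}}\ge R_{\textit{no}}$ precisely when $\pi_B\le 1-\pi_B$, i.e. when $\pi_B\le 1/2$, which is exactly the case split in equation~(\ref{equation: Simmons mechanism epsilon}). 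Taking the logarithm of the dominant ratio in each regime then yields the stated value of $\varepsilon$, and the boundary value $\pi_B=1/2$ is consistent with either branch since there $R_{\textit{yes}}=R_{\textit{no}}$.

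I do not expect a substantive obstacle, since the argument is just a finite check; the only point requiring a little care is confirming that $R_{\textit{yes}}$ and $R_{\textit{no}}$ are genuinely the only constraints, which holds because $|\{\textit{yes},\textit{no}\}|=2$ forces the only input pairs to be $(1,0)$ and $(0,1)$, and the $\varepsilon$-LDP condition is by definition the maximum over outputs of the per-output probability ratios. It is also worth remarking---in contrast to the Warner case---that the assumption $p<1/2$ plays no role in the formula for $\varepsilon$ here: both $R_{\textit{yes}}$ and $R_{\textit{no}}$ exceed $1$ for every $p\in(0,1)$, so $p<1/2$ only enters later, through the variance expression in equation~(\ref{equation: Simmons mechanism variance}).
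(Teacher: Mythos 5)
Your proposal is correct and follows essentially the same route as the paper: both enumerate the same four conditional probabilities $p+(1-p)\pi_B$, $(1-p)(1-\pi_B)$, $(1-p)\pi_B$, $p+(1-p)(1-\pi_B)$ and take the maximum per-output ratio. You merely make explicit the comparison $R_{\textit{yes}} \gtrless R_{\textit{no}}$ that justifies the case split at $\pi_B = 1/2$, which the paper leaves implicit.
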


\begin{proof} We consider four cases:

Case 1: The respondent has sensitive attributes, and the probability of answering \textit{yes} under this condition is $p+(1-p)\pi_B$.

Case 2: The respondent has sensitive attributes, and the probability of answering \textit{no} under this condition is $(1-p)(1-\pi_B)$.

Case 3: The respondent does not have sensitive attributes, and the probability of answering \textit{yes} under this condition is $(1-p)\pi_B$.

Case 4: The respondent does not have sensitive attributes, and the probability of answering \textit{no} under this condition is $p+(1-p)(1-\pi_B)$.

For any two respondents with the same answer, the maximum ratio of proportions is
\begin{equation*}
\varepsilon=\begin{cases}
	\ln \frac{p+(1-p) \pi_{B}}{(1-p) \pi_{B}} &, \pi_{B} \leq \frac{1}{2}\\
	\ln \frac{p+(1-p)\left(1-\pi_{B}\right)}{(1-p)\left(1-\pi_{B}\right)} &, \pi_{B}>\frac{1}{2}. \hfill \qedhere
\end{cases}
\end{equation*}
\end{proof}

\begin{theorem}[Modified Simmons mechanism variance versus epsilon]\label{theorem: Modified Simmons mechanism variance versus epsilon}
When $\pi_B=1/2$, the variance $\mathbf{Var_{MS}}$ obtains its minimum as following
\begin{equation*}
\mathbf{Var}\hat{\pi_A}=\frac{e^{\varepsilon}}{N\left(e^{\varepsilon}-1\right)^2} .
\end{equation*}
\end{theorem}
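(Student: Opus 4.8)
The plan is to fix the privacy budget $\varepsilon$, regard $\pi_B$ as the free design parameter (using Theorem~\ref{theorem: Simmons mechanism} to express $p$ as a function of $\varepsilon$ and $\pi_B$), and minimise~(\ref{equation: Simmons mechanism variance}) over $\pi_B\in(0,1)$. Following the two branches of~(\ref{equation: Simmons mechanism epsilon}), first consider $\pi_B\le 1/2$. Solving $e^{\varepsilon}=\bigl(p+(1-p)\pi_B\bigr)/\bigl((1-p)\pi_B\bigr)$ for $p$ gives $p=(e^{\varepsilon}-1)\pi_B/\bigl(1+(e^{\varepsilon}-1)\pi_B\bigr)$ and $1-p=1/\bigl(1+(e^{\varepsilon}-1)\pi_B\bigr)$. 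It is convenient to set $t=(e^{\varepsilon}-1)\pi_B$, so that $p=t/(1+t)$ and $1-p=1/(1+t)$, and $t$ ranges over $\bigl(0,(e^{\varepsilon}-1)/2\bigr]$ as $\pi_B$ ranges over $(0,1/2]$.

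Next I would substitute these expressions into~(\ref{equation: Simmons mechanism variance}) and simplify. The key claim is that the variance collapses to the form
\begin{equation*}
\mathbf{Var}\hat{\pi_A}=\frac{1+\pi_A(e^{\varepsilon}-1)}{N(e^{\varepsilon}-1)}\cdot\frac{1}{t}+\Bigl(\frac{e^{\varepsilon}-2}{N(e^{\varepsilon}-1)^2}-\frac{2\pi_A}{N(e^{\varepsilon}-1)}\Bigr),
\end{equation*}
where only the first summand depends on $t$. Since $e^{\varepsilon}-1>0$ and $1+\pi_A(e^{\varepsilon}-1)\ge 1>0$ for every $\pi_A\in[0,1]$, the coefficient of $1/t$ is strictly positive, so $\mathbf{Var}\hat{\pi_A}$ is strictly decreasing in $t$ and hence minimised at the largest admissible value $t=(e^{\varepsilon}-1)/2$, i.e.\ at $\pi_B=1/2$. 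For the branch $\pi_B>1/2$ I would repeat the argument after substituting $\pi_B\mapsto 1-\pi_B$: the privacy relation in~(\ref{equation: Simmons mechanism epsilon}) becomes formally identical, the variance again takes the shape $(\text{positive constant})/s+(\text{constant})$ in the corresponding variable $s=(e^{\varepsilon}-1)(1-\pi_B)$, and it decreases toward $\pi_B=1/2$. Hence over the whole range $\pi_B\in(0,1)$ the minimum is attained at $\pi_B=1/2$.

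To evaluate the minimum, note that at $\pi_B=1/2$ the coefficient of $\pi_A$ in~(\ref{equation: Simmons mechanism variance}), namely $\bigl(1-p-2\pi_B(1-p)\bigr)/(Np)$, vanishes, so the minimal variance does not depend on the unknown $\pi_A$ and equals $\bigl(\pi_B(1-p)-\pi_B^2(1-p)^2\bigr)/(Np^2)=(1-p^2)/(4Np^2)$. The constraint $\pi_B=1/2$ forces $e^{\varepsilon}=(1+p)/(1-p)$, i.e.\ $p=(e^{\varepsilon}-1)/(e^{\varepsilon}+1)$; using $1-p^2=4e^{\varepsilon}/(e^{\varepsilon}+1)^2$ and $p^2=(e^{\varepsilon}-1)^2/(e^{\varepsilon}+1)^2$ then gives $\mathbf{Var}\hat{\pi_A}=e^{\varepsilon}/\bigl(N(e^{\varepsilon}-1)^2\bigr)$, as claimed. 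Equivalently, one could substitute $t=(e^{\varepsilon}-1)/2$ directly into the displayed formula above and watch the $\pi_A$ terms cancel.

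The main obstacle is purely the algebraic bookkeeping in the second step: one must carry the $\pi_A$-dependent term of~(\ref{equation: Simmons mechanism variance}) through the substitution and verify that all $t$-dependence consolidates into a single $1/t$ term with a manifestly positive coefficient. Once that identity is in hand, the monotonicity and the evaluation at $\pi_B=1/2$ are immediate. A point worth stating explicitly is that, since the coefficient of $1/t$ is positive for every $\pi_A\in[0,1]$ and the $\pi_A$-dependent part of the variance disappears exactly at $\pi_B=1/2$, the assertion that $\pi_B=1/2$ minimises the variance holds uniformly in the unknown proportion $\pi_A$.
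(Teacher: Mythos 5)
Your proposal is correct and follows essentially the same route as the paper's Appendix C: substitute the privacy constraint of Theorem~\ref{theorem: Simmons mechanism} into~(\ref{equation: Simmons mechanism variance}), observe that for each branch the $\pi_B$-dependence collapses into a single positive multiple of $1/\pi_B$ (resp.\ $1/(1-\pi_B)$), and conclude by monotonicity that the minimum sits at $\pi_B=1/2$, where the $\pi_A$-terms cancel to give $e^{\varepsilon}/\bigl(N(e^{\varepsilon}-1)^2\bigr)$. Your substitution $t=(e^{\varepsilon}-1)\pi_B$ and the explicit monotonicity statement merely make rigorous what the paper asserts tersely, and your simplified expression agrees algebraically with the paper's displayed form of the variance.
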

 which is the same with that of modified Warner mechanism.
\begin{proof}
\textit{The proof is in Appendix C.}
\end{proof}

\subsection{Modified Christofides Mechanism using LDP}

\begin{theorem}[Modified Christofides mechanism]\label{theorem: Christofides mechanism}

Modified Christofides mechanism satisfies $\varepsilon$-differential privacy, where
\begin{equation*}
\varepsilon=\max \left(\ln \frac{p_{L+1-k}}{p_k}, k=1,2, \cdots, L\right).
\end{equation*}
\end{theorem}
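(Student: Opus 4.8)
The plan is to mirror the case analysis used for Warner and Simmons, but now tracking how the reported value $X_i$ can arise from the two possible values of the sensitive attribute $x_i$. First I would observe that, by the defining relation $X_i = (L+1-Y_i)x_i + Y_i(1-x_i)$, a respondent with $x_i=1$ who draws card $k$ reports $X_i = L+1-k$, while a respondent with $x_i=0$ who draws card $k$ reports $X_i = k$. Hence a fixed output value $O \in \{1,\dots,L\}$ is produced by $x_i=1$ precisely when $Y_i = L+1-O$, which happens with probability $p_{L+1-O}$, and is produced by $x_i=0$ precisely when $Y_i = O$, which happens with probability $p_{O}$. (Here the index $L+1-O$ always lies in $\{1,\dots,L\}$ since $O$ does, so the notation is well-defined.)

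Next I would assemble these conditional probabilities into the LDP ratio. For the two input values $x=1$ and $x'=0$ and output $O$, the ratio $\mathbf{P}[\mathcal{A}(x)=O]/\mathbf{P}[\mathcal{A}(x')=O]$ equals $p_{L+1-O}/p_{O}$, and for the reversed pair it equals $p_{O}/p_{L+1-O}$. Since $\varepsilon$-LDP requires the inequality to hold for every ordered pair of inputs and every output, the tightest admissible $\varepsilon$ is the logarithm of the largest such ratio over all outputs and both orderings, i.e. $\varepsilon = \max_{1\le k\le L} \bigl|\ln(p_{L+1-k}/p_k)\bigr|$. Re-indexing $k \mapsto L+1-k$ shows that ranging $k$ over $1,\dots,L$ in $\ln(p_{L+1-k}/p_k)$ already produces every value together with its negative, so the absolute value can be dropped and one may simply write $\varepsilon = \max\bigl(\ln(p_{L+1-k}/p_k),\ k=1,\dots,L\bigr)$, which is exactly the claimed expression. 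I would then note that for each card drawn the perturbation is independent of all other respondents' data, so the per-respondent guarantee is also the guarantee of the whole mechanism (no composition loss, since each user's output depends only on that user's bit).

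I do not expect a serious obstacle here; the only subtlety worth stating carefully is the bookkeeping that the map $k \mapsto L+1-k$ is an involution on $\{1,\dots,L\}$, which is what lets the single $\max$ over $k$ capture both directions of the privacy inequality and thereby justifies omitting the absolute value. A secondary point to flag is that when some $p_k = 0$ the corresponding ratio is infinite and the mechanism fails $\varepsilon$-LDP for finite $\varepsilon$; the theorem implicitly assumes all $p_k > 0$, consistent with the setup of Christofides mechanism, so I would simply remark on this rather than treat it as a case.
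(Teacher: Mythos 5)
Your proposal is correct and follows essentially the same route as the paper's own proof: enumerate the conditional probabilities of each reported value under $x_i=1$ versus $x_i=0$ (namely $p_{L+1-O}$ and $p_{O}$) and take the maximum log-ratio over outputs and both orderings. Your added remarks --- that the involution $k\mapsto L+1-k$ lets a single maximum capture both directions, and that all $p_k>0$ is implicitly required --- are sound refinements of the same argument rather than a different approach.
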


\begin{proof} We consider four cases:

Case 1: The respondent has sensitive attributes and draws $k$. The probability of answering $L+1-k$ under this condition is $p_{k}$.

Case 2: The respondent has sensitive attributes and draws $L+1-k$. The probability of answering $k$ under this condition is $p_{L+1-k}$.

Case 3: The respondent does not have sensitive attributes and draws $k$. The probability of answering $k$ under this condition is $p_{k}$.

Case 4: The respondent does not have sensitive attributes and draws $L+1-k$. The probability of answering $L+1-k$ under this condition is $p_{L+1-k}$.

For any two respondents with the same answer, the maximum ratio of probabilities is

\begin{equation*}
\varepsilon=\max \left(\ln \frac{p_{L+1-k}}{p_k}, k=1,2, \cdots, L\right).
\end{equation*}
\end{proof}

We fix number $L=3$, so the privacy budget $\varepsilon$ is

\begin{equation*}
\varepsilon=\max \left\{\ln\frac{p_1}{p_3}, \ln\frac{p_3}{p_1}\right\} .
\end{equation*}

\begin{theorem}[Modified Christofides mechanism variance versus epsilon]\label{theorem: Modified Christofides mechanism variance versus epsilon}

For the case $L=3$, the variance of the modified Christofides mechanism $\mathbf{Var_{MC}}$ arrives its minimum

\begin{equation*}
\mathbf{Var}\hat{\pi_A}=\frac{1}{4 N}\left[\frac{(e^\epsilon+1)^2}{(e^\epsilon-1)^2\left(1-p_2\right)}-1\right].
\end{equation*}
\end{theorem}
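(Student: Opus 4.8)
The plan is to reduce the formula of Theorem~\ref{theorem: Modified Christofides mechanism variance} to the two free card proportions and then optimize. Specializing to $L=3$, write $\mathbf{E}(Y)=p_1+2p_2+3p_3$ and $\mathbf{E}(Y^2)=p_1+4p_2+9p_3$. Using the constraint $p_1+p_2+p_3=1$ to eliminate $p_2$, the denominator simplifies dramatically: $L+1-2\mathbf{E}(Y)=4-2(p_1+2p_2+3p_3)=2(p_1-p_3)$, and a short computation gives $\mathbf{Var}(Y)=\mathbf{E}(Y^2)-(\mathbf{E}(Y))^2=(p_1+p_3)-(p_1-p_3)^2=(1-p_2)-(p_1-p_3)^2$. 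Plugging these into Theorem~\ref{theorem: Modified Christofides mechanism variance} yields the exact identity
\begin{equation*}
\mathbf{Var}\hat{\pi_A}=\frac{(1-p_2)-(p_1-p_3)^2}{4N(p_1-p_3)^2}=\frac{1}{4N}\left[\frac{1-p_2}{(p_1-p_3)^2}-1\right],
\end{equation*}
valid for every admissible triple $(p_1,p_2,p_3)$ with $p_1\neq p_3$.

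Next I would carry out the minimization with $p_2$ held fixed, so that $p_1+p_3=1-p_2$ is fixed and only the split between $p_1$ and $p_3$ varies. The right-hand side above is strictly decreasing in $(p_1-p_3)^2$, so the variance is minimized when $|p_1-p_3|$ is as large as the privacy requirement allows. By Theorem~\ref{theorem: Christofides mechanism}, $\varepsilon$-LDP with $L=3$ means $e^{-\varepsilon}\le p_3/p_1\le e^{\varepsilon}$; since $p_1+p_3$ is fixed, $|p_1-p_3|$ is monotone in the ratio $p_3/p_1$, so the optimum sits on the boundary, say $p_3=e^{\varepsilon}p_1$ (the symmetric case $p_1=e^{\varepsilon}p_3$ gives the same value). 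Solving $p_1(1+e^{\varepsilon})=1-p_2$ gives $p_1-p_3=p_1(1-e^{\varepsilon})=\frac{(1-p_2)(1-e^{\varepsilon})}{1+e^{\varepsilon}}$, hence $(p_1-p_3)^2=\frac{(1-p_2)^2(e^{\varepsilon}-1)^2}{(e^{\varepsilon}+1)^2}$.

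Finally, substituting this value into the identity above turns $\frac{1-p_2}{(p_1-p_3)^2}$ into $\frac{(e^{\varepsilon}+1)^2}{(e^{\varepsilon}-1)^2(1-p_2)}$, which is exactly the claimed minimum
\begin{equation*}
\mathbf{Var}\hat{\pi_A}=\frac{1}{4N}\left[\frac{(e^{\varepsilon}+1)^2}{(e^{\varepsilon}-1)^2(1-p_2)}-1\right].
\end{equation*}
I do not expect a genuine obstacle here: the computation is routine once one sees the simplification forced by $p_1+p_2+p_3=1$. The only points that need care are (a) stating precisely that the minimum is taken over $(p_1,p_3)$ with $p_2$ and the budget $\varepsilon$ fixed --- without fixing $p_2$ one could drive it to $0$ and land back at modified Warner's variance $e^{\varepsilon}/[N(e^{\varepsilon}-1)^2]$, which is in fact a useful sanity check of the formula --- and (b) noting that $p_1\neq p_3$ (guaranteed by $\varepsilon>0$) is needed for the estimator and the formula to be well defined.
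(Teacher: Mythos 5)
Your proof is correct and arrives at the stated minimum by the same overall strategy as the paper---specialize Theorem~\ref{theorem: Modified Christofides mechanism variance} to $L=3$, reduce to an optimization over the card proportions, and locate the optimum on the boundary of the $\varepsilon$-LDP feasible region---but your execution is cleaner in two respects. First, the paper keeps the variance as $f(p_1,p_2)=\frac{4p_1+p_2-4p_1^2-p_2^2-4p_1p_2}{N(4p_1+2p_2-2)^2}$, shows $\partial f/\partial p_2\neq 0$ to rule out an interior extremum, and then evaluates $f$ separately at the two endpoints $p_1=\frac{1-p_2}{e^{\varepsilon}+1}$ and $p_1=\frac{e^{\varepsilon}(1-p_2)}{e^{\varepsilon}+1}$ (plus the blow-up at $p_1=p_3$); your substitution $\mathbf{Var}(Y)=(1-p_2)-(p_1-p_3)^2$ and $L+1-2\mathbf{E}(Y)=2(p_1-p_3)$ collapses all of this into the identity $\mathbf{Var}\hat{\pi_A}=\frac{1}{4N}\bigl[\frac{1-p_2}{(p_1-p_3)^2}-1\bigr]$, from which strict monotonicity in $(p_1-p_3)^2$ immediately forces the optimum onto the LDP boundary and makes the equality of the two endpoint values obvious by symmetry. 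Second, your framing---fix $p_2$ and the budget $\varepsilon$, then optimize the split of $1-p_2$ between $p_1$ and $p_3$---makes explicit what the minimization in the theorem statement is actually over, a point the paper leaves implicit (its derivative test is even taken with respect to $p_2$, the variable that ends up held fixed). Your sanity check that $p_2\to 0$ recovers the modified Warner variance $e^{\varepsilon}/[N(e^{\varepsilon}-1)^2]$ matches the paper's own remark that $p_2=0$ reduces the mechanism to modified Warner. No gaps.
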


under the situation that proportions
\begin{equation*}
p_1=\left(1-p_2\right) /\left(e^{\varepsilon}+1\right), p_3=\left[e^{\varepsilon}\left(1-p_2\right)\right] /\left(e^{\varepsilon}+1\right)
\end{equation*}

or

\begin{equation*}
p_1=\left[e^{\varepsilon}\left(1-p_2\right)\right] /\left(e^{\varepsilon}+1\right), p_3=\left(1-p_2\right) /\left(e^{\varepsilon}+1\right).
\end{equation*}

\begin{proof}
\textit{The proof is in Appendix D.}
\end{proof}

Notice that when proportion $p_2=0$, modified Christofides mechanism is substantially equivalent to modified Warner mechanism.

\subsection{Improved Christofides Mechanism using LDP}

\begin{theorem}[the improved Christofides mechanism]\label{theorem: the improved Christofides mechanism}
Suppose that any respondent does not know the cards others drew and the numbers others answered. Then the improved Christofides mechanism satisfies $\varepsilon$-differential privacy, where
\begin{equation*}
\varepsilon=\max \left(\ln \frac{p_{L+1-k}}{p_k}, k=1,2, \cdots, L\right).
\end{equation*}
\end{theorem}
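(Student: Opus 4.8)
The plan is to mimic the proof of Theorem~\ref{theorem: Christofides mechanism} (modified Christofides mechanism), since the card-distribution seen by any individual respondent is the only thing that matters for the LDP guarantee. The key observation is that although the cards are drawn \emph{without} replacement and kept by respondents, the assumption stated in the theorem---that a respondent knows neither which cards the others drew nor which numbers they answered---means that, from the adversary's (collector's) point of view applied to any single respondent $i$, the conditional distribution of respondent $i$'s drawn number $Y_i$ is still governed by the proportions $p_1,\dots,p_L$ among the $N$ cards. More precisely, the output $X_i$ depends on $x_i$ only through the mechanism $X_i=(L+1-Y_i)x_i+Y_i(1-x_i)$, and the worst-case likelihood ratio is computed exactly as in the $M$-card case.

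First I would fix a respondent and an arbitrary output value $o\in\{1,\dots,L\}$, and write down $\mathbf{P}[X_i=o\mid x_i=1]$ and $\mathbf{P}[X_i=o\mid x_i=0]$. For $x_i=1$ we need $Y_i=L+1-o$, which (by symmetry of the card labels) occurs with probability $p_{L+1-o}$; for $x_i=0$ we need $Y_i=o$, with probability $p_o$. So the ratio $\mathbf{P}[X_i=o\mid x_i=1]/\mathbf{P}[X_i=o\mid x_i=0]=p_{L+1-o}/p_o$, and reversing the roles of $x_i$ gives the reciprocal. Then I would take the maximum over all outputs $o$ and both directions, obtaining
\begin{equation*}
\varepsilon=\max\left(\ln\frac{p_{L+1-k}}{p_k},\ k=1,2,\cdots,L\right),
\end{equation*}
which matches the claimed bound. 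Since $x,x'$ in the definition of $\varepsilon$-LDP range over $\{0,1\}$ here, checking these two directions over all $o$ is exhaustive.

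The main obstacle---and the only place the proof genuinely differs from the $M$-card version---is justifying that the per-respondent card distribution is unaffected by the without-replacement sampling and the fact that previous respondents have removed cards. Here the stated assumption does the work: because respondent $i$ (and hence any inference the collector draws about $x_i$ from $X_i$ alone) has no information about which specific cards were removed by respondents $1,\dots,i-1$, the relevant probability space for $Y_i$ is still the initial deck with label-proportions $p_1,\dots,p_L$; marginally, the number on a uniformly chosen remaining card is distributed according to these proportions regardless of which cards were drawn earlier, as long as that draw history is not conditioned on. I would state this as a short lemma (\emph{the marginal distribution of the $i$-th drawn card's label is $(p_1,\dots,p_L)$}) and then the rest of the argument is a verbatim repeat of the four-case analysis in Theorem~\ref{theorem: Christofides mechanism}, so I would simply write ``the proof is identical to that of Theorem~\ref{theorem: Christofides mechanism}'' after establishing the lemma, or defer the details to an appendix as the paper does for its other variance theorems.
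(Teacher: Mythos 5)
Your proposal takes essentially the same route as the paper: the paper's proof is a verbatim repeat of the four-case likelihood-ratio analysis from the modified Christofides mechanism, yielding the worst-case ratio $p_{L+1-k}/p_k$ and hence $\varepsilon=\max_k \ln(p_{L+1-k}/p_k)$, exactly as you compute. The only difference is that you make explicit, as a short lemma, the fact that under without-replacement drawing the marginal label distribution of any single respondent's card is still $(p_1,\dots,p_L)$ given the theorem's no-knowledge assumption --- a step the paper leaves entirely implicit --- so your version is, if anything, slightly more careful.
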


\begin{proof} We consider four cases:

Case 1: The respondent has sensitive attributes and draws $k$. The probability of answering $L+1-k$ under this condition is $p_{k}$.

Case 2: The respondent has sensitive attributes and draws $L+1-k$. The probability of answering $k$ under this condition is $p_{L+1-k}$.

Case 3: The respondent does not have sensitive attributes and draws $k$. The probability of answering $k$ under this condition is $p_{k}$.

Case 4: The respondent does not have sensitive attributes and draws $L+1-k$. The probability of answering $L+1-k$ under this condition is $p_{L+1-k}$.

For any two respondents with the same answer, the maximum ratio of probabilities is

\begin{equation*}
\varepsilon=\max \left(\ln \frac{p_{L+1-k}}{p_k}, k=1,2, \cdots, L\right).
\end{equation*}
\end{proof}

We fix number $L=3$, so the privacy budget $\varepsilon$ is

\begin{equation*}
\varepsilon=\max \left\{\ln\frac{p_1}{p_3}, \ln\frac{p_3}{p_1}\right\} .
\end{equation*}

\begin{theorem}[the improved Christofides variance versus epsilon]\label{theorem: the improved Christofides variance versus epsilon}
For the case $L=3$, the variance of the improved Christofides mechanism $\mathbf{Var_{IC}}$ arrives its minimum

\begin{equation*}
\mathbf{Var}\hat{\pi_A}=\frac{\pi_A\left(1-\pi_A\right)}{(N-1)}\left[\frac{\left(e^\epsilon+1\right)^2}{\left(e^\epsilon-1\right)^2\left(1-p_2\right)}-1\right]
\end{equation*}

under the situation that proportions
\begin{equation*}
p_1=\left(1-p_2\right) /\left(e^{\varepsilon}+1\right), p_3=\left[e^{\varepsilon}\left(1-p_2\right)\right] /\left(e^{\varepsilon}+1\right)
\end{equation*}

or

\begin{equation*}
p_1=\left[e^{\varepsilon}\left(1-p_2\right)\right] /\left(e^{\varepsilon}+1\right), p_3=\left(1-p_2\right) /\left(e^{\varepsilon}+1\right).
\end{equation*}
\end{theorem}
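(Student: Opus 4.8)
The plan is to reduce the statement to Theorem~\ref{theorem: Modified Christofides mechanism variance versus epsilon}, which already pins down the optimal card distribution for the modified mechanism, and then carry along the single multiplicative factor that separates $\mathbf{Var_{IC}}$ from $\mathbf{Var_{MC}}$. The starting observation is that, comparing Theorem~\ref{theorem: the improved Christofides mechanism variance} with Theorem~\ref{theorem: Modified Christofides mechanism variance},
\[
\mathbf{Var_{IC}}\hat{\pi_A}=\frac{4\pi_A(1-\pi_A)\,\mathbf{Var}Y}{(N-1)\,(L+1-2\mathbf{E}Y)^2}=\frac{4N\pi_A(1-\pi_A)}{N-1}\cdot\mathbf{Var_{MC}}\hat{\pi_A},
\]
so $\mathbf{Var_{IC}}$ is just $\mathbf{Var_{MC}}$ multiplied by the positive constant $c:=4N\pi_A(1-\pi_A)/(N-1)$, which does not depend on the card proportions $p_1,p_2,p_3$.

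The key steps are then: (i) observe that, by Theorem~\ref{theorem: Christofides mechanism} and Theorem~\ref{theorem: the improved Christofides mechanism}, the modified and the improved mechanisms share the same privacy budget $\varepsilon=\max_k\ln(p_{L+1-k}/p_k)$, so fixing $\varepsilon$ and $p_2$ restricts $(p_1,p_3)$ to the same feasible set in both settings; (ii) since scaling an objective by $c>0$ leaves its minimizers unchanged, minimizing $\mathbf{Var_{IC}}$ over $(p_1,p_3)$ is, up to the factor $c$, exactly the optimization solved in Theorem~\ref{theorem: Modified Christofides mechanism variance versus epsilon}; (iii) invoke that theorem for $L=3$: the minimum of $\mathbf{Var_{MC}}$ is $\tfrac{1}{4N}\left[\tfrac{(e^\varepsilon+1)^2}{(e^\varepsilon-1)^2(1-p_2)}-1\right]$, attained at $p_1=(1-p_2)/(e^\varepsilon+1),\ p_3=e^\varepsilon(1-p_2)/(e^\varepsilon+1)$ or the symmetric pair; (iv) multiply by $c$ to obtain
\[
\min\mathbf{Var_{IC}}\hat{\pi_A}=\frac{4N\pi_A(1-\pi_A)}{N-1}\cdot\frac{1}{4N}\left[\frac{(e^\varepsilon+1)^2}{(e^\varepsilon-1)^2(1-p_2)}-1\right]=\frac{\pi_A(1-\pi_A)}{N-1}\left[\frac{(e^\varepsilon+1)^2}{(e^\varepsilon-1)^2(1-p_2)}-1\right],
\]
attained at the same distribution, which is exactly the claim.

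For a self-contained argument one can instead compute directly. Setting $s=1-p_2=p_1+p_3$ and $\alpha=p_1/s$, a short calculation gives $L+1-2\mathbf{E}Y=2s(2\alpha-1)$ and $\mathbf{Var}Y=s\bigl(1-s(2\alpha-1)^2\bigr)$, hence $\mathbf{Var}Y/(L+1-2\mathbf{E}Y)^2=\tfrac{1}{4s(2\alpha-1)^2}-\tfrac14$. The $\varepsilon$-LDP constraint for $L=3$ reads $e^{-\varepsilon}\le \alpha/(1-\alpha)\le e^{\varepsilon}$, i.e.\ $\alpha\in[1/(e^\varepsilon+1),\,e^\varepsilon/(e^\varepsilon+1)]$, an interval symmetric about $1/2$ on which $(2\alpha-1)^2$ attains its maximum $\bigl((e^\varepsilon-1)/(e^\varepsilon+1)\bigr)^2$ at both endpoints; plugging this in and multiplying by $4\pi_A(1-\pi_A)/(N-1)$ reproduces the stated formula and the stated minimizers.

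I do not expect a real obstacle. The only point needing a little care is the boundary argument: since $\tfrac{1}{4s(2\alpha-1)^2}-\tfrac14$ is strictly decreasing in $(2\alpha-1)^2$, the constrained minimum lies on the boundary $p_1/p_3=e^{\pm\varepsilon}$ of the LDP region rather than in its interior, and one should check that these endpoints give genuine probability vectors with the prescribed $p_2$. The reduction route sidesteps even this by borrowing the conclusion of Theorem~\ref{theorem: Modified Christofides mechanism variance versus epsilon}; the remaining work is routine algebra.
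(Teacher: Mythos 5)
Your proposal is correct, and its primary route is genuinely different from the paper's. The paper disposes of this theorem with ``the proof is similar with that in Appendix D,'' i.e.\ it implicitly redoes the direct optimization of Appendix D (compute $\mathbf{E}Y=3-2p_1-p_2$ and $\mathbf{Var}Y$, show $\partial f/\partial p_2\neq 0$ so there is no interior extremum, and evaluate the objective at the boundary points $p_1/p_3=e^{\pm\varepsilon}$ of the feasible region) with the factor $\tfrac{1}{4N}$ replaced by $\tfrac{\pi_A(1-\pi_A)}{N-1}$. Your reduction instead observes that, by Theorems~\ref{theorem: Modified Christofides mechanism variance} and~\ref{theorem: the improved Christofides mechanism variance}, $\mathbf{Var_{IC}}=\tfrac{4N\pi_A(1-\pi_A)}{N-1}\,\mathbf{Var_{MC}}$ with a factor independent of $(p_1,p_2,p_3)$, and that the two mechanisms have identical privacy budgets (Theorems~\ref{theorem: Christofides mechanism} and~\ref{theorem: the improved Christofides mechanism}), hence identical feasible sets; since a positive constant rescaling preserves minimizers, the conclusion follows from Theorem~\ref{theorem: Modified Christofides mechanism variance versus epsilon} by multiplication. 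This is cleaner and less error-prone than re-running the calculus, and it makes transparent why the optimal card distribution is the same in both theorems --- the same observation the paper itself exploits later when computing the ratio $28.7\%$ in Section~\ref{section: Comparing four Mechanisms using LDP on a Real World Dataset}. Your self-contained fallback (parametrizing by $s=1-p_2$ and $\alpha=p_1/s$, so that the objective becomes $\tfrac{1}{4s(2\alpha-1)^2}-\tfrac14$ up to the constant factor, minimized at the endpoints $\alpha\in\{1/(e^\varepsilon+1),\,e^\varepsilon/(e^\varepsilon+1)\}$) is essentially Appendix D in a tidier coordinate system and checks out. The only degenerate cases your reduction silently assumes away are $\pi_A\in\{0,1\}$ (where the scaling constant vanishes and the statement is vacuous) and $N>1$; neither affects the claim.
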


\begin{proof}
\textit{The proof is similar with that in Appendix D}.
\end{proof}

\section{Comparing four Mechanisms using LDP}\label{section: Comparing four Mechanisms using LDP}

Since the estimation of $\pi_A$ for modified Warner/Simmons, Christofides mechanisms and the improved Christofides mechanism are all unbiased estimations, we compare variance of the four mechanisms by theoretical analysis and numerical simulation.

\subsection{Comparing four Mechanisms using LDP by Theoretical Analysis}
This section compares the utility of these four mechanisms by calculating variances under the same privacy budget $\varepsilon$.

\begin{theorem}[Comparison four Mechanisms using LDP by Theoretical Analysis]\label{theorem: Comparison four Mechanisms using LDP by Theoretical Analysis}

The comparison can be summarized as follows:

(\romannumeral1.) When proportion $\pi_A$ is in interval $\left(0, \pi_{A_1}\right) \cup\left(\pi_{A_2}, 1\right)$, $\mathbf{Var_{IC}}<\mathbf{Var_{MW/MS}}<\mathbf{Var_{MC}}$. $\pi_{A_1}$ and $\pi_{A_2}$ are shown in equation (\ref{equation: pi_A1 and pi_A2}).

(\romannumeral2.) When proportion $\pi_A$ is in interval $(\pi_{A_1}, 1/2-1/2\sqrt{N}) \cup( 1/2+1/2\sqrt{N}, \pi_{A_2})$, $\mathbf{Var_{MW/MS}}<\mathbf{Var_{IC}}<\mathbf{Var_{MC}}$.

(\romannumeral3.) When proportion $\pi_A$ is in interval $(1/2-1/2\sqrt{N}, 1/2+1\\/2\sqrt{N})$, $\mathbf{Var_{MW/MS}} <\mathbf{Var_{MC}}<\mathbf{Var_{IC}}$.

\end{theorem}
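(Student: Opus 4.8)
The plan is to reduce everything to the single scalar $t=e^{\varepsilon}>1$ and the free parameter $p_2\in[0,1)$, and then carry out three pairwise comparisons. First I would recall that by Theorem~\ref{theorem: Modified Simmons mechanism variance versus epsilon} the optimal modified Simmons variance equals the modified Warner variance, so it suffices to treat $\mathbf{Var_{MW/MS}}=\frac{t}{N(t-1)^2}$ as a single quantity. Writing $C:=\frac{(t+1)^2}{(t-1)^2(1-p_2)}-1$, Theorems~\ref{theorem: Modified Christofides mechanism variance versus epsilon} and~\ref{theorem: the improved Christofides variance versus epsilon} give $\mathbf{Var_{MC}}=\frac{C}{4N}$ and $\mathbf{Var_{IC}}=\frac{\pi_A(1-\pi_A)\,C}{N-1}$, hence $\mathbf{Var_{IC}}=\frac{4N\pi_A(1-\pi_A)}{N-1}\,\mathbf{Var_{MC}}$. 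This last identity already encodes Theorem~\ref{theorem: Comparison Modified Christofides and the Improved Christofides Mechanisms by Theoretical Analysis}: $\mathbf{Var_{IC}}\lessgtr\mathbf{Var_{MC}}$ according to whether $(2\pi_A-1)^2\gtrless 1/N$, i.e. whether $\pi_A$ lies outside or inside $(1/2-1/(2\sqrt N),\,1/2+1/(2\sqrt N))$.

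The first new ingredient is the comparison of $\mathbf{Var_{MW/MS}}$ with $\mathbf{Var_{MC}}$. Here I would use the elementary identity $(t+1)^2-4t=(t-1)^2$, which gives $1+\frac{4t}{(t-1)^2}=\frac{(t+1)^2}{(t-1)^2}$ and hence
\begin{equation*}
\mathbf{Var_{MC}}-\mathbf{Var_{MW/MS}}=\frac{1}{4N}\left[\frac{(t+1)^2}{(t-1)^2(1-p_2)}-\frac{(t+1)^2}{(t-1)^2}\right]=\frac{(t+1)^2\,p_2}{4N(t-1)^2(1-p_2)}\ge 0,
\end{equation*}
with equality exactly when $p_2=0$ (the degenerate case where modified Christofides collapses to modified Warner). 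So $\mathbf{Var_{MW/MS}}\le\mathbf{Var_{MC}}$ on the whole range of $\pi_A$, strictly once $p_2>0$; and $r:=\mathbf{Var_{MC}}/\mathbf{Var_{MW/MS}}=1+\frac{(t+1)^2 p_2}{4t(1-p_2)}\ge 1$.

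The second new ingredient is $\mathbf{Var_{IC}}$ versus $\mathbf{Var_{MW/MS}}$. Dividing, $\mathbf{Var_{IC}}<\mathbf{Var_{MW/MS}}$ iff $\frac{4Nr}{N-1}\,\pi_A(1-\pi_A)<1$, i.e. iff $\pi_A(1-\pi_A)<\frac{N-1}{4Nr}$; the quadratic $\pi_A^2-\pi_A+\frac{N-1}{4Nr}=0$ has roots
\begin{equation*}
\pi_{A_1}=\frac12-\frac12\sqrt{1-\tfrac{N-1}{Nr}},\qquad \pi_{A_2}=\frac12+\frac12\sqrt{1-\tfrac{N-1}{Nr}},
\end{equation*}
which is exactly~(\ref{equation: pi_A1 and pi_A2}); they are real since $r\ge 1$ forces $\frac{N-1}{Nr}\le\frac{N-1}{N}<1$. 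The crucial monotonicity observation is that, again because $r\ge 1$, $1-\frac{N-1}{Nr}\ge 1-\frac{N-1}{N}=\frac1N$, so $\pi_{A_1}\le\frac12-\frac1{2\sqrt N}$ and $\pi_{A_2}\ge\frac12+\frac1{2\sqrt N}$; thus $0<\pi_{A_1}\le\frac12-\frac1{2\sqrt N}<\frac12+\frac1{2\sqrt N}\le\pi_{A_2}<1$, which is what makes the three-case partition well defined.

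Finally I would assemble the cases. On $(0,\pi_{A_1})\cup(\pi_{A_2},1)$ both $\pi_A(1-\pi_A)<\frac{N-1}{4Nr}$ and $(2\pi_A-1)^2>1/N$ hold, giving $\mathbf{Var_{IC}}<\mathbf{Var_{MW/MS}}<\mathbf{Var_{MC}}$; on $(\pi_{A_1},\frac12-\frac1{2\sqrt N})\cup(\frac12+\frac1{2\sqrt N},\pi_{A_2})$ the first inequality flips while the second still holds, giving $\mathbf{Var_{MW/MS}}<\mathbf{Var_{IC}}<\mathbf{Var_{MC}}$; and on $(\frac12-\frac1{2\sqrt N},\frac12+\frac1{2\sqrt N})$ we have $(2\pi_A-1)^2<1/N$, so $\mathbf{Var_{MW/MS}}<\mathbf{Var_{MC}}<\mathbf{Var_{IC}}$. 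I expect the only real obstacle to be the bookkeeping in this last step — keeping the two threshold pairs straight and confirming, via $r\ge 1$, that the intervals nest as stated; the algebra in each individual comparison is short. One should also record the tacit standing assumption $p_2>0$, without which $\mathbf{Var_{MW/MS}}=\mathbf{Var_{MC}}$ and the strict inequalities degenerate.
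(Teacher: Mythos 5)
Your proposal is correct and follows essentially the same route as the paper: the same three pairwise comparisons (IC vs.\ MC via Theorem~\ref{theorem: Comparison Modified Christofides and the Improved Christofides Mechanisms by Theoretical Analysis}, MC vs.\ MW/MS by a direct nonnegative difference proportional to $p_2$, and IC vs.\ MW/MS via the roots $\pi_{A_1},\pi_{A_2}$ of a quadratic in $\pi_A$), and your $r$-based expressions for $\pi_{A_1},\pi_{A_2}$ agree with equation~(\ref{equation: pi_A1 and pi_A2}). The one substantive addition is your check, via $r\ge 1$, that $\pi_{A_1}\le 1/2-1/(2\sqrt{N})$ and $\pi_{A_2}\ge 1/2+1/(2\sqrt{N})$, which the paper's proof tacitly assumes in order for the three-case partition to be exhaustive.
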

\begin{proof}

Firstly, according to theorem \ref{theorem: Comparison Modified Christofides and the Improved Christofides Mechanisms by Theoretical Analysis}, when proportion $\pi_A$ is in interval $(1/2-1/2\sqrt{N}, 1/2+1/2\sqrt{N})$, we have $\mathbf{Var_{IC}}>\mathbf{Var_{MC}}$, while $\mathbf{Var_{IC}}<\mathbf{Var_{MC}}$ when proportion $\pi_A$ is in interval $(0, 1/2-1/2\sqrt{N}) \cup (1/2+1/2\sqrt{N}, 1)$.

Secondly, we compare the variance of modified Christofides mechanism $\mathbf{Var_{MC}}$ with that of the of modified Warner/Simmons mechanisms ($\mathbf{Var_{MW/MS}}$).

Since

\begin{equation*}
\begin{aligned}
& \mathbf{Var_{MC}}\hat{\pi_A} - \mathbf{Var_{MW/MS}}\hat{\pi_A} \\
& =\frac{\left(e^{\varepsilon}+1\right)^2 p_2}{4 N\left(e^{\varepsilon}-1\right)^2\left(1-p_2\right)}, \\
& \geq 0.
\end{aligned}
\end{equation*}

the variance of modified Christofides mechanism $\mathbf{Var_{MC}}$ is large than that of modified Warner/Simmons mechanisms $\mathbf{Var_{MW/MS}}$.

Then, we compare the variance of the improved Christofides mechanism $\mathbf{Var_{IC}}$ with that of modified Warner/Simmons mechanisms $\mathbf{Var_{MW/MS}}$. The difference between the variances of the improved Christofides mechanism and modified Warner/Simmons mechanisms is

\begin{equation*}
\begin{aligned}
& \mathbf{Var_{IC}}\hat{\pi_A}-\mathbf{Var_{MW/MS}}\hat{\pi_A} \\
&= \frac{\pi_A\left(1-\pi_A\right)}{N-1}\left[\frac{\left(e^{\varepsilon}+1\right)^2}{\left(e^{\varepsilon}-1\right)^2\left(1-p_2\right)}-1\right]-\frac{e^{\varepsilon}}{N\left(e^{\varepsilon}-1\right)^2} \\
&=\frac{\left[4 e^{\varepsilon}+\left(e^{\varepsilon}-1\right)^2 p_2\right] \pi_A\left(1-\pi_A\right)-(N-1)\left(1-p_2\right) e^{\varepsilon}}{N(N-1)\left(e^{\varepsilon}-1\right)^2\left(1-p_2\right)}
\end{aligned}
\end{equation*}

Let

\begin{equation*}
a=4 e^{\varepsilon}+\left(e^{\varepsilon}-1\right)^2 p_2, c=(N-1)\left(1-p_2\right) e^{\varepsilon}
\end{equation*}

We have

\begin{equation*}
\begin{aligned}
& \mathbf{ Var_{IC}}\hat{\pi_A}-\mathbf{Var_{MW/MS}}\hat{\pi_A}\\
& =\frac{-a \pi_A^2+a \pi_A-c}{N(N-1)\left(e^{\varepsilon}-1\right)^2\left(1-p_2\right)} \\
& =\frac{-a\left(\pi_A-\pi_{A_1}\right)\left(\pi_A-\pi_{A_2}\right)}{N(N-1)\left(e^{\varepsilon}-1\right)^2\left(1-p_2\right)}
\end{aligned}
\end{equation*}

\begin{equation*}
\begin{aligned}
& \pi_{A_1}=\frac{a-\sqrt{\Delta}}{2 a} \\
& \pi_{A_2}=\frac{a+\sqrt{\Delta}}{2 a} \\
& \Delta=a^2-4 a c.
\end{aligned}
\end{equation*}

\begin{equation*}
\begin{aligned}
& \pi_{A_1}=\frac{1}{2}-\frac{1}{2} \sqrt{1-\frac{4 c}{a}} \\
& =\frac{1}{2}-\frac{1}{2} \sqrt{1-\frac{4 e^{\varepsilon}(N-1)\left(1-p_2\right)}{N\left[4 e^{\varepsilon}+p_2\left(e^{2 \varepsilon}-2 e^{\varepsilon}+1\right)\right]}} \\
& =\frac{1}{2}-\frac{1}{2} \sqrt{1-\frac{(N-1)\left(1-p_2\right)}{N\left[1+\frac{p_2}{4}\left(e^{\varepsilon}+\frac{1}{e^{\varepsilon}}-2\right)\right]}}.
\end{aligned}
\end{equation*}

\begin{equation*}
\begin{aligned}
& \pi_{A_2}=\frac{1}{2}+\frac{1}{2} \sqrt{1-\frac{4 c}{a}} \\
& =\frac{1}{2}+\frac{1}{2} \sqrt{1-\frac{4 e^{\varepsilon}(N-1)\left(1-p_2\right)}{N\left[4 e^{\varepsilon}+p_2\left(e^{2 \varepsilon}-2 e^{\varepsilon}+1\right)\right]}} \\
& =\frac{1}{2}+\frac{1}{2} \sqrt{1-\frac{(N-1)\left(1-p_2\right)}{N\left[1+\frac{p_2}{4}\left(e^{\varepsilon}+\frac{1}{e^{\varepsilon}}-2\right)\right]}}.
\end{aligned}
\end{equation*}

Let

\begin{equation*}
d=e^{\varepsilon}+\frac{1}{e^{\varepsilon}}-2.
\end{equation*}

So

\begin{equation}\label{equation: pi_A1 and pi_A2}
\begin{aligned}
& \pi_{A_1}=\frac{1}{2}-\frac{1}{2} \sqrt{1-\frac{N-1}{N} \cdot \frac{1-p_2}{1+\frac{p_2d}{4}}} \\
& \pi_{A_2}=\frac{1}{2}+\frac{1}{2} \sqrt{1-\frac{N-1}{N} \cdot \frac{1-p_2}{1+\frac{p_2d}{4}}}.
\end{aligned}
\end{equation}

As long as $\pi_{A}<\pi_{A_1}$ or $\pi_{A}>\pi_{A_2}$, we have

\begin{equation*}
\begin{aligned}
\mathbf{ Var_{IC}}\hat{\pi_A}-\mathbf{Var_{MW/MS}}\hat{\pi_A}<0.\qedhere
\end{aligned}
\end{equation*}

\end{proof}

The improved Christofides mechanism performs worst in interval $[\pi_{A_1},\pi_{A_2}]$. The interval $[\pi_{A_1},\pi_{A_2}]$ depends on $N$, $\varepsilon$ and $p_2$. Because the interval is symmetric about 0.5, we calculate the interval length under different proportions $p_2$ and privacy budgets $\varepsilon$ after fix $N=10^4$. The results are reported in Table \ref{table: interval length}. The interval $[\pi_{A_1},\pi_{A_2}]$ is usually small.

\begin{table}[htbp]
\caption{The length of interval $[\pi_{A_1},\pi_{A_2}]$ under different proportion $p_2$ and privacy budget $\varepsilon$.}
    \centering
    \begin{tabular}{|c|c|c|c|c|}
    \hline
        \diagbox{$p_2$}{$\varepsilon$} & 0.01 & 0.05 & 0.25 & 0.5 \\ \hline
        0.01 & 0.100 & 0.101 & 0.101 & 0.104  \\ \hline
        0.05 & 0.224 & 0.224 & 0.225 & 0.230  \\ \hline
    \end{tabular}
    \label{table: interval length}
\end{table}

%

\subsection{Comparing four Mechanisms using LDP by Numerical Simulation}

To plot the curve of variance versus privacy budget $\varepsilon$ by numerical simulation for four mechanisms, we fix sample size $N=10^2$ and proportion $\pi_A=0.1$ for four mechanisms, fix proportion $\pi_B=0.5$ for modified Simmons mechanism, and proportion $p_2=0.5$ for modified Christofides and the improved Christofides mechanisms. Then experiment according to algorithm \ref{algorithm: modified Warner Mechanism} to \ref{algorithm: the improved Christofides Mechanism}. We report the results averaging 10000 runs. The theoretical analysis and numerical simulation result of four mechanisms is shown in Fig.\ref{figure: modifiedWarnerSimmonmodifiedChristofidesandtheimprovedChristofides}. The theoretical analysis value is verified by numerical simulation value. It shows that for these four mechanisms, the larger the privacy budget $\varepsilon$ is, the smaller the variance is.

\begin{figure}[htbp]
\centering
\includegraphics[width=3.5in]{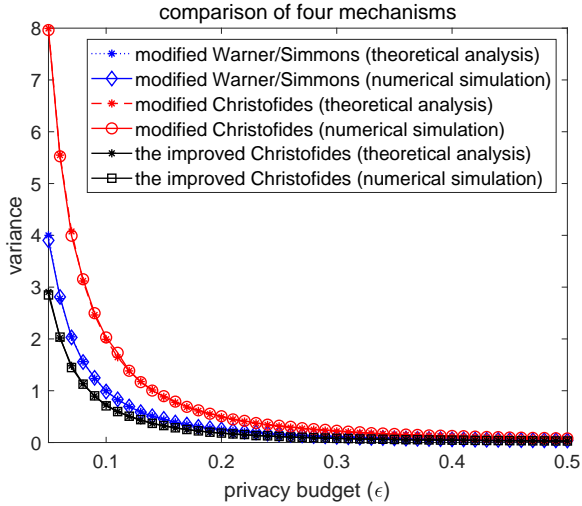}
\caption{The variance versus privacy budget $\varepsilon$ of four mechanisms when fix sample size $N=10^2$ and proportion $\pi_A=0.1$ (by theoretical analysis and numerical simulation).}
\label{figure: modifiedWarnerSimmonmodifiedChristofidesandtheimprovedChristofides}
\end{figure}

We then report curves that variance versus privacy budget $\varepsilon$ of four mechanisms by numerical simulation under different conditions in theorem \ref{theorem: Comparison four Mechanisms using LDP by Theoretical Analysis}. We fix sample size $N=9$ for four mechanisms, fix proportions $\pi_B=0.5$ for modified Simmons mechanism, fix proportion $p_2=0.36$ for modified Christofides and the improved Christofides mechanisms, and set proportion $\pi_A \in \left\{1/9, 2/9, 4/9 \right\}$. The results are reported in Fig. \ref{compare_versus_pi_A=0.11} to \ref{compare_versus_pi_A=0.44}.

Fig. \ref{compare_versus_pi_A=0.11} shows that when proportion $\pi_A$ is in interval $\left(0, \pi_{A_1}\right) \cup\left(\pi_{A_2}, 1\right)$, the improved Christofides mechanism is the best, and modified Warner/Simmons mechanism is suboptimal, while modified Christofides mechanism is the worst.

Fig. \ref{compare_versus_pi_A=0.22} shows that when proportion $\pi_A$ is in interval $(\pi_{A_1},  1/2 \\ -1/2\sqrt{N}) \cup\left(1/2+1/2\sqrt{N}, \pi_{A_2}\right)$, modified Warner/Simmons mechanism is the best, and the improved Christofides mechanism is suboptimal, while modified Christofides mechanism is the worst.

Fig. \ref{compare_versus_pi_A=0.44} shows that when proportion $\pi_A$ is in interval $(1/2-1/2\sqrt{N}, 1/2+1/2\sqrt{N})$, modified Christofides mechanism is the best, and modified Warner/Simmons mechanism is suboptimal, while the improved Christofides mechanism is the worst.

In the real world, the proportion of people with sensitive attributes is usually small. This means the improved Christofides mechanism is usually the best method at most time.

Actually, in section \ref {section: Comparing four Mechanisms using LDP on a Real World Dataset}, we experiment based on a read world dataset.
\begin{figure}[htbp]
\centering
\includegraphics[width=3.5in]{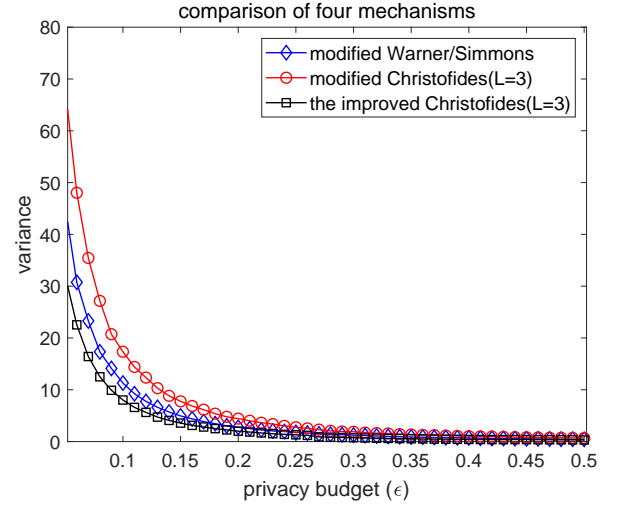}
\caption{The variance versus privacy budget $\varepsilon$ of four mechanisms when fix sample size $N=9$ and proportion $\pi_A=1/9$ (by numerical simulation).}
\label{compare_versus_pi_A=0.11}
\end{figure}

\begin{figure}[htbp]
\centering
\includegraphics[width=3.5in]{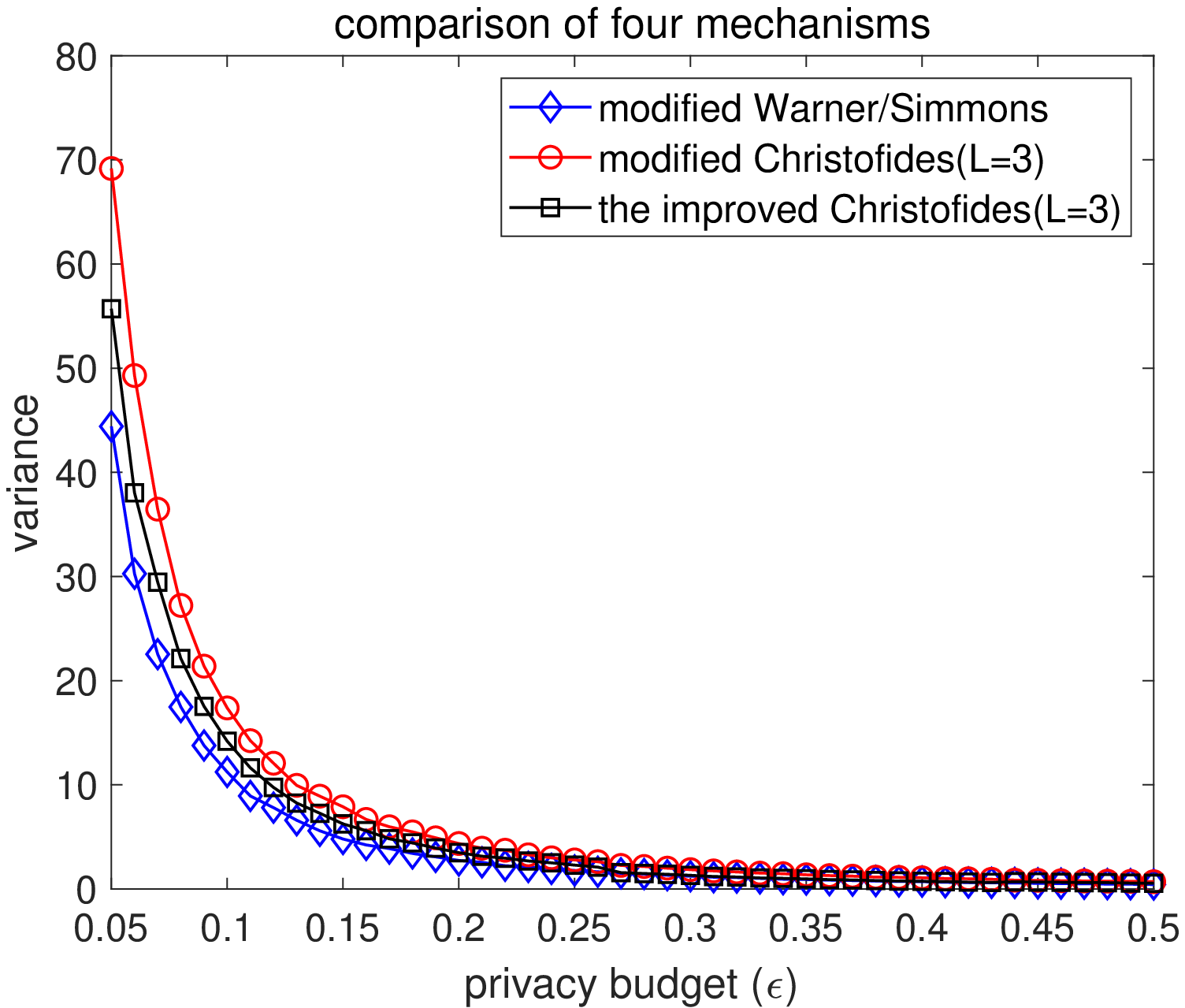}
\caption{The variance versus privacy budget $\varepsilon$ of four mechanisms when fix sample size $N=9$ and proportion $\pi_A=2/9$ (by numerical simulation).}
\label{compare_versus_pi_A=0.22}
\end{figure}

\begin{figure}[htbp]
\centering
\includegraphics[width=3.5in]{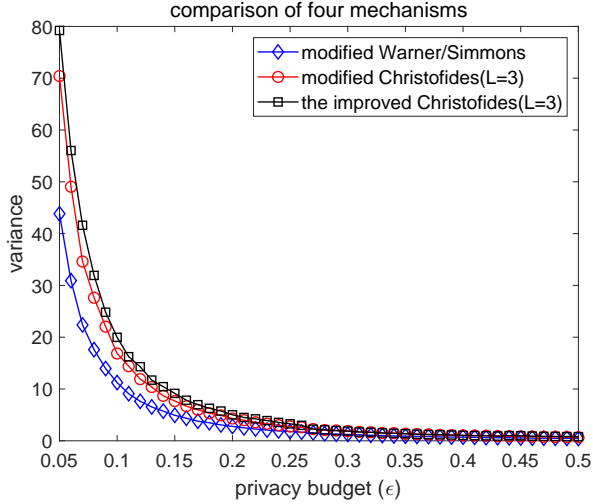}
\caption{The variance versus privacy budget $\varepsilon$ of four mechanisms when fix sample size $N=9$ and proportion $\pi_A=4/9$ (by numerical simulation).}
\label{compare_versus_pi_A=0.44}
\end{figure}

\subsection{Comparing four Mechanisms using LDP on a Real World Dataset}\label{section: Comparing four Mechanisms using LDP on a Real World Dataset}
\textbf{Dataset.}\textit{HCOVANY Dataset~\cite{HCOVANY2021}.} We conduct our experiments on an open real world dataset. HCOVANY indicates whether persons had any health insurance coverage during the interview, as measured by employer-provided insurance(HINSEMP), privately purchased insurance (HINSPUR), Medicare (HINSCARE), Medicaid or other governmental insurance (HINSCAID), TRICARE or other military care (HINSTRI), or Veterans Administration-provided insurance (HINSVA). The Census Bureau does not consider the respondents with health insurance coverage if their only coverage is from Indian Health Services (HINSIHS), as IHS policies are not always comprehensive. Codes of 2 indicate that a person is covered (either directly or through another household member's policy) by the given type of insurance; codes of 1 indicate that a person is not covered. We select the subset of \textit{HCOVANY Dataset} in our experiment (the health insurance coverage of $1\%$ of total population in USA in 2021, the sampling size $N=3252599$).

\textbf{Privacy Budget $\varepsilon$.} According to Harvard's list~\cite{Harvard} of varying levels of sensitivity for datasets and reasonable privacy loss parameters for each level, we set privacy budget $\varepsilon$ from 0.05 to 0.5, since health insurance coverage information that could cause risk of material harm to individuals if disclosed.

We fix proportion $p_2=0.01$ for modified Christofides mechanism and the improved Christofides mechanism. Then experiment according to algorithm \ref{algorithm: modified Warner Mechanism} to \ref{algorithm: the improved Christofides Mechanism}. We report the results averaging $10^4$ runs. The comparison of variance of these four mechanisms versus the privacy budget $\varepsilon$ are shown in Fig. \ref{figure: compare based on a real dataset}.

\begin{figure}[htbp]
\centering
\includegraphics[width=3.5in]{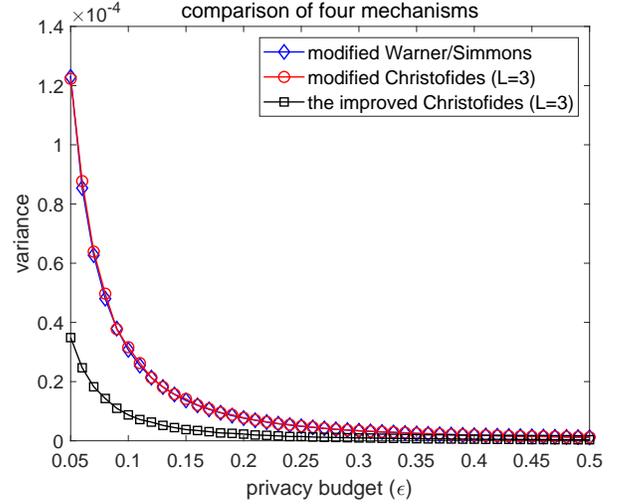}
\caption{The comparison of variance versus privacy budget $\varepsilon$ of four mechanisms based on HCOVANY dataset.}
\label{figure: compare based on a real dataset}
\end{figure}

The variance of modified Warner mechanism is the same with that of modified Simmons mechanism. The variance of modified Christofides mechanism is slightly larger than that of modified Warner and Simmons mechanism. The variance of the improved Christofides mechanism is the smallest of mechanisms. In fact, we decrease the variance to $28.7\%$ of modified Christofides mechanism's variance based on HCOVANY dataset.

\begin{equation*}
\begin{aligned}
& \frac{\mathbf{Var}{_{IC}\hat{\pi_{A}}}}{\mathbf{Var}{_{MC}\hat{\pi_{A}}}}= \frac{4 N \pi_A\left(1-\pi_A\right)}{N-1} \\
& =\frac{4 \times 3252599 \times 0.0778\times(1-0.0778)}{3252599-1}\\
& =28.7\%.
\end{aligned}
\end{equation*}

\subsection{Comparing the Minimum Sample Size of the four Mechanisms}

Since an important task for statisticians is to determine the minimum number of samples $N$ under given constraints, we compare the minimum sampling size $N$ for the given proportion $\pi_A=0.1$, variance $\mathbf{Var}\hat{\pi_A}=0.1$ and privacy budget $\varepsilon$ of four mechanisms(modified Warner, Simmons, Christofides ($p_2=0.01$) and the improved Christofides mechanisms ($p_2=0.01$)). The result is shown in Table \ref{table: N}. The result shows that the larger the privacy budget $\varepsilon$ is, the smaller the sample size $N$ is for the four mechansims. The minimum number of samples $N$ of modified Warner and Simmons mechanisms are the same. The minimum number of samples $N$ of modified Christofides mechanism is slightly larger than that of modified Warner/Simmons mechanisms. The minimum number of samples $N$ of the improved Christofides mechanism is the smallest one.

\begin{table}[htbp]
\caption{The minimum sampling size $N$ for given proportion $\pi_A=0.1$, variance $\mathbf{Var}\hat{\pi_A}=0.1$ and privacy budget $\varepsilon$ of four mechanisms.}
    \centering
    \begin{tabular}{|c|c|c|c|c|}
    \hline
        \diagbox{Mechanisms}{Privacy Budget} & 0.01 & 0.05 & 0.25 & 0.5  \\ \hline
        modified Warner/Simmons & 100000 & 4000 & 160 & 40  \\ \hline
        modified Christofides & 101011 & 4040 & 161 & 40  \\ \hline
        the improved Christofides & 36365 & 1456 & 59 & 16 \\ \hline
    \end{tabular}
    \label{table: N}
\end{table}

Especially, if we do not know that the proportion belongs to sensitive group when using the improved Christofides mechanism, for example, under constraints of variance $\mathbf{Var} \leq 0.1$, privacy budget $\varepsilon=0.01$ and choosing proportion $p_2=0.01$, since the variance get the maximum when $\pi_A=0.5$, sample size $N$ should be at least 101011($N>101010.3$).

Since the second statement of modified Simmons mechanism is unrelated and insensitive, the degree of cooperation of the respondents with sensitive attributes is higher than that of modified Warner mechanism. On account of the respondent is only the requirement to report numbers in modified Christofides and the improved Christofides mechanisms. The degree of cooperation of the respondents with sensitive attributes is higher than that of modified Warner and Simmons mechanism. In the improved Christofides mechanism, none of the respondent knows the numbers others drew and the numbers others answered, which is proper in the real world. The freedom to know numbers others drew and numbers others answered is sacrificed in exchange for smaller variance under the same privacy budget $\varepsilon$.

We compare the variance and minimum sample size of modified Warner, Simmons, Christofides and the improved Christofides mechanisms using LDP as the common measurement of privacy leakage by theoretical analysis, numerical simulation and experiment based on a real world dataset. As a result, we find the improved Christofides mechanism for LDP framework. This is our third contribution.

\section{Conclusion}

In conclusion, we have modified Warner, Simmons and Christofides mechanism by drawing samples without replacement, thus reducing the variance of each mechanism. We then improve modified Christofides mechanism by sampling cards without replacement. Considering theoretical analysis, numerical simulation and experiment based on a real world dataset, we compared modified Warner, Simmons, Christofides and the improved Christofides mechanism using LDP as the measurement of privacy leakage. We found that the variance of modified Christofides mechanism was the smallest when the proportion of people with sensitive attributes $\pi_A < \pi_{A_1}$ or $\pi_A > \pi_{A_2}$, which is usually proper in the real world, and that the improved Christofides mechanism can be used to replace the Warner mechanism in sensitive issue investigation and LDP scenarios, as it offers a better trade-off between privacy budget, variance, and the degree of cooperation.

\section*{Appendices}
\appendix

\section{Proof of theorem \ref{theorem: Modified Christofides mechanism variance}}

\setcounter{theorem}{0}
\renewcommand{\thetheorem}{3.\arabic{theorem}}
\begin{theorem}[Modified Christofides mechanism variance]\label{theorem: ModifiedChristofides mechanism variance}
The variance of $\hat{\pi_A}$ is
\begin{equation*}
\begin{aligned}
\mathbf{Var}\hat{\pi_A} & =\frac{\mathbf{Var}\left(\sum_{i=1}^N X_i\right)}{N^2(L+1-2 \mathbf{E}(Y))^2} \\
& =\frac{\mathbf{Var}Y}{N(L+1-2 \mathbf{E}(Y))^2}.
\end{aligned}
\end{equation*}
\end{theorem}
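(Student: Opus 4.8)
The plan is to exploit the fact that in the \emph{modified} Christofides mechanism the sample of size $N$ is drawn without replacement from a population of size $N$, so the sample coincides with the whole population: the attribute indicators $x_1,\dots,x_N$ are fixed constants (not random), and the only source of randomness is the sequence of card draws $Y_1,\dots,Y_N$. First I would observe that because every respondent returns the card to the device before the next respondent draws, the $Y_i$ are i.i.d.\ with $\mathbf{P}(Y_i=k)=p_k$; consequently, since $X_i=(L+1-Y_i)x_i+Y_i(1-x_i)$ is a deterministic function of $Y_i$ alone once $x_i$ is treated as a constant, the random variables $X_1,\dots,X_N$ are mutually independent.

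Next I would compute $\mathbf{Var}(X_i)$ by cases: if $x_i=1$ then $X_i=L+1-Y_i$, and if $x_i=0$ then $X_i=Y_i$; in both cases $\mathbf{Var}(X_i)=\mathbf{Var}(Y)$ because variance is invariant under translation and sign change. By independence, $\mathbf{Var}\!\left(\sum_{i=1}^N X_i\right)=\sum_{i=1}^N \mathbf{Var}(X_i)=N\,\mathbf{Var}(Y)$. Substituting this into the expression $\mathbf{Var}\hat{\pi_A}=\mathbf{Var}\!\left(\sum_{i=1}^N X_i\right)\big/\big(N^2(L+1-2\mathbf{E}(Y))^2\big)$ — which follows directly from the definition of the unbiased estimator $\hat{\pi}_A$ — immediately yields $\mathbf{Var}\hat{\pi_A}=\mathbf{Var}(Y)\big/\big(N(L+1-2\mathbf{E}(Y))^2\big)$.

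The only genuinely load-bearing step, and the one I would be most careful about, is the independence claim: it rests on the combination of (a) sampling the entire population so the $x_i$ are non-random, and (b) replacing the card so the $Y_i$ do not become coupled. This is exactly the point that distinguishes the present theorem from Theorem~\ref{theorem: the improved Christofides mechanism variance}, where drawing cards without replacement forces a covariance term $\mathbf{Cov}(X_i,X_j)\neq 0$ and produces the $4\pi_A(1-\pi_A)$ factor. Once independence is in hand, the remainder is the routine variance-identity bookkeeping sketched above, so I would keep that part brief.
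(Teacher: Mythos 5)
Your proposal is correct and follows essentially the same route as the paper's Appendix~A: both reduce $\mathbf{Var}\left(\sum_{i=1}^N X_i\right)$ to a sum of per-respondent variances (using that the sample exhausts the population, so the attribute counts are fixed, and that returning each card makes the draws $Y_i$ i.i.d.), and both conclude that each term equals $\mathbf{Var}(Y)$, giving $N\,\mathbf{Var}(Y)$. The only difference is cosmetic: where you invoke invariance of variance under the reflection $Y \mapsto L+1-Y$, the paper verifies $\sum_{k=1}^L k^2 p_{L+1-k}-\left(\sum_{k=1}^L k p_{L+1-k}\right)^2=\sum_{k=1}^L k^2 p_{k}-\left(\sum_{k=1}^L k p_{k}\right)^2$ by explicit algebraic expansion.
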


\begin{proof}
Firstly, we have

\begin{equation*}
\begin{aligned}
\mathbf{Var}\left(\sum_{i=1}^NX_i\right)= & N\pi_A\left[\sum_{k=1}^Lk^2p_{L+1-k}-\left(\sum_{k=1}^Lkp_{L+1-k}\right)^2\right]+ \\
& N(1-\pi_A)\left[\sum_{k=1}^Lk^2p_{k}-\left(\sum_{k=1}^Lkp_k\right)^2\right] \\
\end{aligned}
\end{equation*}

\begin{equation*}
\begin{aligned}
= & N\pi_A\left[\sum_{k=1}^L(L+1-k)^2p_k-\sum_{k=1}^Lk^2p_k  \right]+ \\
& N\pi_A\left[-(\sum_{k=1}^L(L+1-k)p_k)^2+(\sum_{k=1}^Lkp_k)^2\right]+\\
& N\left[\sum_{k=1}^Lk^2p_{k}-\left(\sum_{k=1}^Lkp_k\right)^2 \right]\\
= & N\pi_A\left[\sum_{k=1}^L((L+1)^2-2(L+1)k)p_k\right]- \\
& N\pi_A\left[-(\sum_{k=1}^L(L+1-k)p_k)^2+(\sum_{k=1}^Lkp_k)^2\right]+\\
& N\pi_A\left[(\sum_{k=1}^L((L+1)p_k)(\sum_{k=1}^L((L+1-2k)p_k)\right]\\
= & N\left[\sum_{k=1}^Lk^2p_{k}-\left(\sum_{k=1}^Lkp_k\right)^2 \right].
\end{aligned}
\end{equation*}

Thus
\begin{equation*}
\begin{aligned}
\mathbf{Var}\hat{\pi_A} &= \frac{\mathbf{Var}\left(\frac{1}{N}\sum_{i=1}^NX_i\right)}{(L+1-2E(Y))^2} \\
     &= \frac{\mathbf{Var}Y}{N(L+1-2E(Y))^2}. \qedhere
\end{aligned}
\end{equation*}
\end{proof}

\section{Proof of theorem \ref{theorem: the improved Christofides mechanism variance}}

\setcounter{theorem}{0}
\renewcommand{\thetheorem}{4.\arabic{theorem}}
\begin{theorem}[the improved Christofides mechanism variance]\label{theorem: the improved Christofides mechanism variance}
The variance of $\hat{\pi_A}$ is
\begin{equation*}
\begin{aligned}
\mathbf{Var}\hat{\pi_A} & =\frac{\mathbf{Var}\left(\sum_{i=1}^N X_i\right)}{N^2(L+1-2 \mathrm{E}(Y))^2} \\
& =\frac{4 \pi_A\left(1-\pi_A\right) \mathbf{Var}(Y)}{(N-1)(L+1-2 \mathbf{E}(Y))^2}.
\end{aligned}
\end{equation*}
\end{theorem}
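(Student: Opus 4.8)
The plan is to reduce $\mathbf{Var}\!\big(\sum_{i=1}^N X_i\big)$ to a single classical object: the variance of the sum of a simple random sample drawn \emph{without replacement} from a finite population. The key structural facts are that in the improved mechanism the deck holds exactly $N p_j$ cards bearing value $j$ for each $j=1,\dots,L$, that these $N$ cards are handed out one per respondent in uniformly random order, and that the attribute indicators $x_1,\dots,x_N$ are fixed with exactly $\pi_A N$ of them equal to $1$. Consequently $\sum_{i=1}^N Y_i=\sum_{j=1}^L j\,(N p_j)=N\,\mathbf{E}(Y)$ is a deterministic constant, so all randomness in $\sum_i X_i$ is carried by the cards landing on the Group-$A$ respondents.

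First I would split the answer sum according to the (fixed) value of $x_i$. Setting $S_1=\sum_{i:\,x_i=1}Y_i$ and using $\big|\{i:x_i=1\}\big|=\pi_A N$, the identity $X_i=(L+1-Y_i)x_i+Y_i(1-x_i)$ gives
\begin{equation*}
\begin{aligned}
\sum_{i=1}^N X_i &= \sum_{i:\,x_i=1}(L+1-Y_i)+\sum_{i:\,x_i=0}Y_i \\
&= (L+1)\pi_A N + \Big(\sum_{i=1}^N Y_i\Big) - 2S_1 \\
&= (L+1)\pi_A N + N\,\mathbf{E}(Y) - 2S_1 .
\end{aligned}
\end{equation*}
The first two terms are constants, so $\mathbf{Var}\!\big(\sum_{i=1}^N X_i\big)=4\,\mathbf{Var}(S_1)$.

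Next I would identify $S_1$ exactly. Since the deal is uniformly random and the Group-$A$ respondents form a fixed set of size $m=\pi_A N$, the multiset of cards they receive is a simple random sample without replacement of size $m$ from the population of the $N$ card values; by construction that population has mean $\mathbf{E}(Y)$ and variance $\sum_{j=1}^L j^2 p_j-\big(\sum_{j=1}^L j p_j\big)^2=\mathbf{Var}(Y)$ (normalized with divisor $N$). The standard finite-population formula then yields $\mathbf{Var}(S_1)=m\,\mathbf{Var}(Y)\,\frac{N-m}{N-1}$, and substituting $m=\pi_A N$, $N-m=(1-\pi_A)N$ gives $\mathbf{Var}(S_1)=\frac{\pi_A(1-\pi_A)N^2\,\mathbf{Var}(Y)}{N-1}$. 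Hence $\mathbf{Var}\!\big(\sum_i X_i\big)=\frac{4\pi_A(1-\pi_A)N^2\,\mathbf{Var}(Y)}{N-1}$, and dividing by $N^2(L+1-2\mathbf{E}(Y))^2$ produces the claimed expression.

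The main obstacle is not the algebra but making the probabilistic model airtight: one must justify that the cards held by the Group-$A$ respondents are genuinely exchangeable (hence a uniform without-replacement subsample), which rests on the mechanism's stated assumption that a respondent learns neither others' drawn cards nor others' answers, and one must apply the finite-population correction $\frac{N-m}{N-1}$ with the matching variance normalization. A minor technical caveat worth noting is that $\pi_A N$ and the counts $N p_j$ must be integers; as elsewhere in the paper this is absorbed into the large-$N$ / Law of Large Numbers viewpoint and does not affect the leading-order variance.
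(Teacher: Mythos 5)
Your proposal is correct, and it reaches the stated formula by a genuinely different and shorter route than the paper. The paper's Appendix~B proof is a direct second-moment expansion: it writes $\mathbf{Var}\sum_i X_i=\sum_i\mathbf{E}(X_i^2)+\sum_{i\neq j}\mathbf{E}(X_iX_j)-(\mathbf{E}\sum_i X_i)^2$, computes the pairwise card moment $\mathbf{E}Y_iY_j=(\mathbf{E}Y)^2-\frac{1}{N-1}\mathbf{Var}Y$ and the hypergeometric attribute moments $\mathbf{E}x_ix_j$, $\mathbf{E}(1-x_i)(1-x_j)$, $\mathbf{E}x_i(1-x_j)$, and then cancels a long string of cross terms. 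You instead exploit the structural fact that the full deck is dealt out, so $\sum_i Y_i=N\,\mathbf{E}(Y)$ is a constant; this collapses $\sum_i X_i$ to an affine function of $S_1=\sum_{i:x_i=1}Y_i$, giving $\mathbf{Var}(\sum_i X_i)=4\,\mathbf{Var}(S_1)$, and $S_1$ is exactly the sum of a size-$\pi_A N$ simple random sample without replacement from the $N$ card values, so the classical finite-population formula $\mathbf{Var}(S_1)=m\,\mathbf{Var}(Y)\frac{N-m}{N-1}$ finishes the computation. Your route makes the provenance of both the $\frac{1}{N-1}$ correction and the $4\pi_A(1-\pi_A)$ factor transparent (and explains at a glance why the variance vanishes at $\pi_A\in\{0,1\}$), at the cost of having to argue carefully that the cards held by the Group-$A$ respondents really form a uniform without-replacement subsample; the paper's treatment of the $x_i$ as exchangeable hypergeometric variables is equivalent to your conditioning on a fixed attribute assignment, since the card deal is independent of which respondents lie in Group~$A$. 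Your closing caveats about exchangeability and integrality of $\pi_A N$ and $Np_j$ are exactly the right ones to flag, and they are implicitly assumed in the paper as well.
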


\begin{proof}
Since the estimator of $\pi_A$
\begin{equation*}
\hat {\pi_A}=\frac{\frac{1}{N}\sum_{i=1}^N X_i-\mathbf{E} Y}{(L+1-2\mathbf{E} Y)}.
\end{equation*}
with the variance

\begin{equation*}
\begin{aligned}
\mathbf{Var} \hat{\pi_A}&=\mathbf{Var}\left( \frac{\frac{1}{N}\sum_{i=1}^N X_i-\mathbf{E} Y}{L+1-2\mathbf{E} Y}\right)\\
&=\frac{1}{N^2(L+1-2\mathbf{E} Y)^2}\mathbf{Var}\sum_{i=1}^N X_i
\end{aligned}
\end{equation*}

,where

\begin{equation*}
\begin{aligned}
\mathbf{Var}\sum_{i=1}^N X_i&=\mathbf{E}(\sum_{i=1}^N X_i)^2-(\mathbf{E} \sum_{i=1}^NX_i)^2\\
  &=\sum_{i=1}^N \mathbf{E} (X_i^2)+\sum_{i\neq j}\mathbf{E}\left( X_iX_j\right)-(\mathbf{E} \sum_{i=1}^NX_i)^2.\\
\end{aligned}
\end{equation*}

In the improved Christofides mechanism, $X_i$ and $X_j$ are no longer i.i.d. case, which yields difficulty in the proof.
Here, both the participants and cards are sampled without replacement, which gives us dependent $Y_i$ and $Y_j$, $x_i$ and $x_j$, but $Y_i$ and $x_i$ are independent.

Firstly,

\begin{equation}\label{equation: the improved Christofides mechanism variance firstly}
\begin{aligned}
& \sum_{i=1}^N \mathbf{E}\left(X_i^2\right)=\sum_{i=1}^N \mathbf{E}\left[\left(L+1-Y_i\right)^2 x_i^2+Y_i^2 x_i^2\right] \\
& =N\left[\mathbf{E}(L+1-Y)^2 \mathbf{E} x_i^2+\mathbf{E} Y^2 \mathbf{E}\left(1-x_i\right)^2\right] \\
& =N\left[(L+1)^2+\mathbf{E} Y^2-2(L+1) \mathbf{E} Y\right] \mathbf{E} x_i^2+\\
& N \mathbf{E} Y^2 \mathbf{E}\left(1-2 x_i+x_i^2\right) \\
& =N \mathbf{E} Y^2-2 N(L+1) \pi_A \mathbf{E} Y+N(L+1)^2 \pi_A.
\end{aligned}
\end{equation}

Secondly,

For $i \neq j$
\begin{equation*}
\begin{aligned}
\mathbf{E} X_iX_j &=\mathbf{E} \left((L+1-Y_i)x_i+Y_i(1-x_i)\right)\cdot\\
&\left((L+1-Y_j)x_j+Y_j(1-x_j)\right),\\
&=\mathbf{E} (L+1-Y_i)(L+1-Y_j)\mathbf{E} x_ix_j\\
&+\mathbf{E} Y_iY_j\mathbf{E} (1-x_i)(1-x_j)\\
&+2\mathbf{E} Y_i(L+1-Y_j)\mathbf{E}(1-x_i)x_j.
\end{aligned}
\end{equation*}

Because we have
\begin{equation*}
\begin{aligned}
\mathbf{E} Y_iY_j &=\sum_{m=1}^N \sum_{n=1}^N m n \mathbf{P}(Y_i=m,\,Y_j=n)\\
&=\sum_{m=1}^N m\left(\sum_{k\neq m}^N n p_m\frac{p_nN}{N-1}+mp_m\frac{p_mN-1}{N-1}\right)\\
&=\frac{N}{N-1}\mathbf{E} Y\sum_{m=1}^N mp_m-\frac{1}{N-1}\sum_{m=1}^N m^2p_l\\
&=(\mathbf{E} Y)^2-\frac{1}{N-1}\mathbf{Var} Y.
\end{aligned}
\end{equation*}
and

\begin{equation*}
\begin{aligned}
&\mathbf{E} x_i=\mathbf{P}(x_i=1)=\pi_A\\
&\mathbf{E} x_ix_j=\mathbf{P}(x_i=1,\, x_j=1)=\frac{C_{N\pi_A}^2}{C_N^2}\\
&\mathbf{E} (1-x_i)(1-x_j)(i \neq j)=\mathbf{P}(x_i=0,\, x_j=0)=\frac{C_{N-N\pi_A}^2}{C_N^2}\\
&\mathbf{E} (1-x_i)x_j(i \neq j)=\mathbf{P}(x_i=0,\, x_j=1)=\frac{C_{N\pi_A}^1C_{N-N\pi_A}^1}{A_N^2}.
\end{aligned}
\end{equation*}

We can get

\begin{equation}\label{equation: the improved Christofides mechanism variance secondly}
\begin{aligned}
\mathbf{E} X_i X_j & =\mathbf{E}[(L+1)^2-Y_i(L+1)+Y_i Y_j] \mathbf{E} x_i x_j+ \mathbf{E} Y_i Y_j \\
& \mathbf{E}(1-x_i)(1-x_j)+ \mathbf{E}(L+1-Y_i) Y_j \mathbf{E} x_i(1-x_j)+\\
& \mathbf{E} Y_i(L+1-Y_j) \mathbf{E}(1-x_i) x_j \\
& =\mathbf{E} Y_i Y_j[\mathbf{E} x_i x_j+\mathbf{E}(1-x_i)(1-x_j)-\mathbf{E} x_i(1-x_j)-\\
& \mathbf{E}(1-x_i) x_j] + {\mathbf{E} Y[-2(L+1) \mathbf{E} x_i x_j+(L+1) \mathbf{E} x_i} \\
& {(1-x_j)+ (L+1) \mathbf{E}(1-x_i) x_j]} +(L+1)^2 \mathbf{E} x_i x_j \\
& ={[N \pi_A(N \pi_A-1)+(N-N \pi_A)(N-N \pi_A-1)-} \\
& {2N+\pi_A(N-N \pi_A)] \mathbf{E} Y_i Y_j + [-2(L+1) N \pi_A(N \pi_A-}\\
& {1)+2(L+1) N \pi_A(N-N \pi_A)] \mathbf{E} Y + (L+1)^2 N \pi_A}\\
& {(N \pi_A-1)}\\
& ={[4 N^2 \pi_A^2-4 N^2 \pi_A+N^2-N] \mathbf{E} Y_i Y_j+2(L+1)} \\
& {N \pi_A[1+N-2 N \pi_A] \mathbf{E} Y+(L+1)^2 N \pi_A(N \pi_A-1)} \\
& ={[4 N^2 \pi_A^2-4 N^2 \pi_A+N^2-N][(\mathbf{E} Y)^2-\frac{1}{N-1}} \\
& {{\mathbf{Var} Y]} + 2(L+1) N \pi_A[1+N-2 N \pi_A] \mathbf{E} Y+(L+1)^2} \\
& {N \pi_A(N \pi_A-1).}
\end{aligned}
\end{equation}

Thirdly,
\begin{equation}\label{equation: the improved Christofides mechanism variance thirdly}
\begin{aligned}
\mathbf{E}(\sum_{i=1}^N X_i)^2 & =N^2[\mathbf{E} Y+\pi_A(L+1-2 \mathbf{E} Y)]^2 \\
& =N^2[(L+1) \pi_A+(1-2 \pi_A) \mathbf{E} Y]^2 \\
& ={N^2[(L+1)^2 \pi_A^2+(1-2 \pi_A)^2(\mathbf{E} Y)^2+2(L+1) \pi_A} \\
& {(1-2 \pi_A) \mathbf{E} Y].}
\end{aligned}
\end{equation}

Finally, combine equation(\ref{equation: the improved Christofides mechanism variance firstly}), (\ref{equation: the improved Christofides mechanism variance secondly}) and (\ref{equation: the improved Christofides mechanism variance thirdly}), the variance of the estimator $\hat {\pi_A}$ is
\begin{equation*}
\begin{aligned}
\mathbf{Var} \hat{\pi_A}& =\frac{1}{N^2(L+1-2\mathbf{E} Y)^2}\mathbf{Var}\sum_{i=1}^N X_i\\
&=\frac{4\pi_A(1-\pi_A)\mathbf{Var} Y}{(N-1)(L+1-2\mathbf{E} Y)^2}. \qedhere
\end{aligned}
\end{equation*}
\end{proof}

\section{Proof of theorem \ref{theorem: Modified Simmons mechanism variance versus epsilon}}

\setcounter{theorem}{2}
\renewcommand{\thetheorem}{5.\arabic{theorem}}

\begin{theorem}[Modified Simmons mechanism variance versus epsilon]\label{theorem: Modified Simmons mechanism variance versus epsilon}
When $\pi_B=1/2$, the variance $\mathbf{Var_{\pi_A}}$ obtains minimum as following
\begin{equation*}
\mathbf{Var}\hat{\pi_A}=\frac{e^{\varepsilon}}{N\left(e^{\varepsilon}-1\right)^2}.
\end{equation*}
\end{theorem}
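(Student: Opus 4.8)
The plan is to use the privacy constraint of Theorem~\ref{theorem: Simmons mechanism} to eliminate the tuning parameter $p$ in favour of $\pi_B$, reduce the variance~(\ref{equation: Simmons mechanism variance}) to a one-variable function whose monotonicity is transparent, and then evaluate it at $\pi_B=1/2$. First, fixing $\varepsilon$ and $N$ and working in the branch $\pi_B\le 1/2$ of~(\ref{equation: Simmons mechanism epsilon}), I would rewrite the constraint $\varepsilon=\ln\frac{p+(1-p)\pi_B}{(1-p)\pi_B}$ as the linear identity $(1-p)\pi_B=\frac{p}{e^{\varepsilon}-1}$, which solves to $p=1-\bigl(1+(e^{\varepsilon}-1)\pi_B\bigr)^{-1}$; in particular $p$ is a strictly increasing function of $\pi_B$ on $(0,1/2]$, equal to $p^{\star}:=\frac{e^{\varepsilon}-1}{e^{\varepsilon}+1}$ precisely when $\pi_B=1/2$. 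Substituting $(1-p)\pi_B=\frac{p}{e^{\varepsilon}-1}$ into~(\ref{equation: Simmons mechanism variance}) collapses it to the affine-in-$1/p$ form
\begin{equation*}
\mathbf{Var}\hat{\pi_A}=\frac{1}{N}\left[\frac{1}{p}\left(\frac{1}{e^{\varepsilon}-1}+\pi_A\right)-\frac{1}{(e^{\varepsilon}-1)^2}-\pi_A-\frac{2\pi_A}{e^{\varepsilon}-1}\right].
\end{equation*}

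Since the coefficient $\frac{1}{e^{\varepsilon}-1}+\pi_A$ is strictly positive for every $\pi_A\in[0,1]$ and the remaining additive terms are $p$-free, this variance is strictly decreasing in $p$, hence (as $p$ grows with $\pi_B$) strictly decreasing in $\pi_B$ on $(0,1/2]$; therefore its minimum over the branch is attained at the largest admissible value $p=p^{\star}$, i.e.\ at $\pi_B=1/2$. A convenient check along the way: at $p=p^{\star}$ one has $1/p^{\star}=1+\frac{2}{e^{\varepsilon}-1}$, so the $\pi_A$-dependent part cancels identically — this is the algebraic shadow of the fact that $\pi_B=1/2$ is the unique feasible point where the coefficient $1-p-2\pi_B(1-p)$ of $\pi_A$ in~(\ref{equation: Simmons mechanism variance}) vanishes. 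Evaluating what is left gives $\mathbf{Var}\hat{\pi_A}=\frac{e^{\varepsilon}}{N(e^{\varepsilon}-1)^2}$; independently, plugging $p=p^{\star}$ into the $\pi_B=1/2$ form $\frac{1-p^2}{4Np^2}$ (using $1-p^{\star}=\frac{2}{e^{\varepsilon}+1}$) yields the same value, which matches $\mathbf{Var_{MW}}$ as claimed.

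It remains to rule out the complementary branch $\pi_B>1/2$. Here I would set $q=1-\pi_B<1/2$, use the second case of~(\ref{equation: Simmons mechanism epsilon}) to get $(1-p)q=\frac{p}{e^{\varepsilon}-1}$, and repeat the substitution into the first (pre-simplified) form of~(\ref{equation: Simmons mechanism variance}): one again obtains an expression of the shape $\frac{A_0}{Np}-\frac{B_0}{N}$ with $A_0>0$, hence strictly decreasing in $p$, whose admissible range of $p$ is still capped by $p^{\star}$ and reaches $p^{\star}$ only in the limit $\pi_B\to 1/2$, so no configuration with $\pi_B>1/2$ beats $\frac{e^{\varepsilon}}{N(e^{\varepsilon}-1)^2}$. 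The main obstacle is precisely this bookkeeping: the true proportion $\pi_A$ is not a design parameter, so the claimed minimizer $\pi_B=1/2$ must dominate \emph{uniformly} over all $\pi_A\in[0,1]$ and over both branches of the privacy formula; the monotonicity-in-$p$ argument is what supplies that uniformity, since once $p$ is pushed to $p^{\star}$ the $\pi_A$-terms disappear and the residual constant is branch-independent.
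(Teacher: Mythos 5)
Your proposal is correct and follows essentially the same route as the paper's Appendix C: eliminate $p$ via the privacy constraint $(1-p)\pi_B = p/(e^{\varepsilon}-1)$ (and its mirror for $\pi_B>1/2$), reduce the variance to a one-parameter expression, and identify $\pi_B=1/2$ as the minimizer on both branches. Your version is in fact slightly more complete, since you make explicit the monotonicity-in-$p$ argument and the cancellation of the $\pi_A$-dependent terms at $p^{\star}=(e^{\varepsilon}-1)/(e^{\varepsilon}+1)$, which the paper asserts without detail.
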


\begin{proof}
Substitute (\ref{equation: Simmons mechanism epsilon}) into (\ref{equation: Simmons mechanism variance}). The variance can be written as
\begin{equation*}
\begin{aligned}
(1) \textit{When} \quad \pi_B \leq \frac{1}{2}, \\
\mathbf{Var}\hat{\pi_A}=& \frac{\left(\frac{1}{e^{\varepsilon}-1}+\pi_A\right)\left(1+\frac{1}{\left(e^{\varepsilon}-1\right) \pi_B}\right)}{N}-\frac{1}{N\left(e^{\varepsilon}-1\right)^2} \\
& -\frac{\left(e^{\varepsilon}+1\right) \pi_A}{N\left(e^{\varepsilon}-1\right)} \\
\min \mathbf{Var}\hat{\pi_A}=& \mathbf{Var}\hat{\pi_A}_{|_{\pi_B=\frac{1}{2}}}=\frac{e^{\varepsilon}}{N\left(e^{\varepsilon}-1\right)^2}.
\end{aligned}
\end{equation*}

\begin{equation*}
\begin{aligned}
(2) \textit{When} \quad \pi_B>\frac{1}{2}, \\
\mathbf{Var}\hat{\pi}_A=& \frac{\left(\frac{1}{e^{\varepsilon}-1}+\pi_A\right)\left(1+\frac{1}{\left(e^{\varepsilon}-1\right)\left(1-\pi_B\right)}\right)}{N}-\frac{1}{N\left(e^{\varepsilon}-1\right)^2} \\
& -\frac{\left(e^{\varepsilon}+1\right) \pi_A}{N\left(e^{\varepsilon}-1\right)} \\
\min \mathbf{Var}\hat{\pi_A}=& \mathbf{Var}\hat{\pi_A}_{\pi_B=\frac{1}{2}}=\frac{e^{\varepsilon}}{N\left(e^{\varepsilon}-1\right)^2}.
\end{aligned}
\end{equation*}

So the variance of the estimator $\hat {\pi_A}$ is

\begin{equation*}
	\begin{aligned}
        \mathbf{Var}\hat{\pi_A}=&\frac{e^{\varepsilon}}{N\left(e^{\varepsilon}-1\right)^{2}}.\qedhere
	\end{aligned}
\end{equation*}
\end{proof}

\section{Proof of theorem \ref{theorem: Modified Christofides mechanism variance versus epsilon}}

\setcounter{theorem}{4}
\renewcommand{\thetheorem}{5.\arabic{theorem}}
\begin{theorem}[Modified Christofides mechanism variance versus epsilon]\label{theorem: Modified Christofides mechanism variance versus epsilon}
For the case $L=3$, the variance of the modified Christofides mechanism $\mathbf{Var_{MC}}$ arrives its minimum

\begin{equation*}
\mathbf{Var}\hat{\pi_A}=\frac{1}{4 N}\left[\frac{(e^\epsilon+1)^2}{(e^\epsilon-1)^2\left(1-p_2\right)}-1\right]
\end{equation*}

under the situation that proportions
\begin{equation*}
p_1=\left(1-p_2\right) /\left(e^{\varepsilon}+1\right), p_3=\left[e^{\varepsilon}\left(1-p_2\right)\right] /\left(e^{\varepsilon}+1\right)
\end{equation*}

or

\begin{equation*}
p_1=\left[e^{\varepsilon}\left(1-p_2\right)\right] /\left(e^{\varepsilon}+1\right), p_3=\left(1-p_2\right) /\left(e^{\varepsilon}+1\right).
\end{equation*}
\end{theorem}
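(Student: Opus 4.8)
The plan is to specialize the general variance formula of Theorem~\ref{theorem: Modified Christofides mechanism variance} to $L=3$ and then minimize over the free proportions subject to the privacy constraint of Theorem~\ref{theorem: Christofides mechanism}. First I would eliminate $p_2$ via $p_2 = 1 - p_1 - p_3$ and reduce everything to the single quantity $p_1 - p_3$: a direct computation gives $\mathbf{E} Y = 2 - p_1 + p_3$, hence $L + 1 - 2\mathbf{E} Y = 2(p_1 - p_3)$, while $\mathbf{Var} Y = \mathbf{E} Y^2 - (\mathbf{E} Y)^2 = (p_1 + p_3) - (p_1 - p_3)^2 = (1 - p_2) - (p_1 - p_3)^2$. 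Plugging these into Theorem~\ref{theorem: Modified Christofides mechanism variance} collapses the variance to the compact form
\begin{equation*}
\mathbf{Var}\hat{\pi_A} = \frac{(1-p_2) - (p_1-p_3)^2}{4N(p_1-p_3)^2} = \frac{1}{4N}\left[\frac{1-p_2}{(p_1-p_3)^2} - 1\right],
\end{equation*}
which makes the dependence on the design parameters completely transparent.

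Second, I would fix $p_2$ and the target privacy budget $\varepsilon$, so that $p_1 + p_3 = 1 - p_2$ is fixed, and minimize the expression above over $p_1, p_3 > 0$. Since the right-hand side is strictly decreasing in $(p_1-p_3)^2$, this is the same as maximizing $|p_1 - p_3|$. By Theorem~\ref{theorem: Christofides mechanism} (with $L=3$) the $\varepsilon$-LDP requirement is exactly $|\ln(p_1/p_3)| \le \varepsilon$, i.e.\ $e^{-\varepsilon} \le p_1/p_3 \le e^{\varepsilon}$; writing $r = p_3/p_1$ one has $|p_1 - p_3| = (1-p_2)\,|1-r|/(1+r)$, and $r \mapsto |1-r|/(1+r)$ is decreasing on $(0,1]$, increasing on $[1,\infty)$, and symmetric under $r \mapsto 1/r$, so its maximum on $[e^{-\varepsilon}, e^{\varepsilon}]$ is attained at $r = e^{\pm\varepsilon}$. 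Solving the resulting linear system gives $p_1 = (1-p_2)/(e^{\varepsilon}+1)$ and $p_3 = e^{\varepsilon}(1-p_2)/(e^{\varepsilon}+1)$ (or the symmetric solution with $p_1$ and $p_3$ interchanged), for which $(p_1-p_3)^2 = (1-p_2)^2(e^{\varepsilon}-1)^2/(e^{\varepsilon}+1)^2$.

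Finally, I would substitute this optimal value of $(p_1-p_3)^2$ back into the compact formula to obtain
\begin{equation*}
\mathbf{Var}\hat{\pi_A} = \frac{1}{4N}\left[\frac{(1-p_2)(e^{\varepsilon}+1)^2}{(1-p_2)^2(e^{\varepsilon}-1)^2} - 1\right] = \frac{1}{4N}\left[\frac{(e^{\varepsilon}+1)^2}{(e^{\varepsilon}-1)^2(1-p_2)} - 1\right],
\end{equation*}
which is the asserted minimum. The main work is the algebraic reduction in the first step, where the normalization $p_1 + p_2 + p_3 = 1$ must be used repeatedly to collapse $\mathbf{Var} Y$ and $L+1-2\mathbf{E} Y$ down to $p_1 - p_3$; once that is done, the optimization is an elementary one-variable monotonicity argument. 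I would also flag the nondegeneracy condition $p_1 \ne p_3$ (needed for the estimator to be well defined and the variance finite), which is consistent with the standing assumption that the proportions $p_1, \dots, p_L$ are not all equal, and note that the same computation, applied to the variance formula of Theorem~\ref{theorem: the improved Christofides mechanism variance} instead, yields Theorem~\ref{theorem: the improved Christofides variance versus epsilon} verbatim (this is the sense of ``similar to Appendix~D'').
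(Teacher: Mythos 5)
Your proposal is correct and reaches the stated minimum, but it takes a cleaner route through the middle of the argument than the paper does. The paper's Appendix D works directly with the rational function $f(p_1,p_2)=\frac{4p_1+p_2-4p_1^2-p_2^2-4p_1p_2}{N(4p_1+2p_2-2)^2}$, computes a partial derivative to argue that $f$ has no interior extremum, and then evaluates $f$ at the two boundary values $p_1=\tfrac{1-p_2}{e^{\varepsilon}+1}$ and $p_1=\tfrac{e^{\varepsilon}(1-p_2)}{e^{\varepsilon}+1}$ of the feasible interval $\tfrac{1}{e^{\varepsilon}}p_1\le p_3\le e^{\varepsilon}p_1$ (also noting the blow-up at $p_1=\tfrac{1-p_2}{2}$). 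Your reparametrization through $d=p_1-p_3$, giving $L+1-2\mathbf{E}Y=2(p_1-p_3)$, $\mathbf{Var}\,Y=(1-p_2)-(p_1-p_3)^2$, and hence $\mathbf{Var}\hat{\pi_A}=\frac{1}{4N}\bigl[\frac{1-p_2}{(p_1-p_3)^2}-1\bigr]$, replaces the derivative computation with a one-line monotonicity observation: minimizing the variance is the same as maximizing $|p_1-p_3|$ subject to $e^{-\varepsilon}\le p_1/p_3\le e^{\varepsilon}$ with $p_1+p_3=1-p_2$ fixed, and the maximum is attained at the endpoints $p_3/p_1=e^{\pm\varepsilon}$. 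Both arguments land on the same boundary points and the same closed form; what yours buys is transparency (the variance visibly depends only on $p_2$ and $p_1-p_3$, which also makes it immediate why the improved-Christofides analogue in Theorem~\ref{theorem: the improved Christofides variance versus epsilon} follows from the identical computation), while the paper's buys nothing extra and is arguably more fragile, since its derivative step is with respect to $p_2$ even though the optimization that matters is over $p_1$ at fixed $p_2$. Your explicit flagging of the nondegeneracy $p_1\neq p_3$ is a point the paper only handles implicitly via the divergence of $f$ at $p_1=\tfrac{1-p_2}{2}$.
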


\begin{proof}

The expectation of $Y$ is
\begin{equation*}
\begin{aligned}
\mathbf{E} Y & =p_1+p_2+3\left(1-p_1-p_2\right) \\
& =3-2 p_1-p_2.
\end{aligned}
\end{equation*}

The variance of $Y$ is
\begin{equation*}
\begin{aligned}
\mathbf{Var} Y& =p_1+4 p_2+9\left(1-p_1-p_2\right)-\left(3-2 p_1-p_2\right)^2 \\
& =4 p_1+p_2-4 p_1^2-p_2^2-4 p_1 p_2.
\end{aligned}
\end{equation*}

\begin{equation*}
\begin{aligned}
\mathbf{Var}\hat{\pi}_A& =\frac{\mathbf{Var} Y}{N(3+1-2 \mathbf{E} Y)^2} \\
& =\frac{4 p_1+p_2-4 p_1^2-p_2^2-4 p_1 p_2}{N\left[3+1-2\left(3-2 p_1-p_2\right)\right]^2} \\
& =\frac{4 p_1+p_2-4 p_1^2-p_2^2-4 p_1 p_2}{N\left(4 p_1+2 p_2-2\right)^2}\\
& =f(p_1,p_2).
\end{aligned}
\end{equation*}

Now let's minimize the function $f(p_1,p_2)$.

Since

\begin{equation*}
\begin{aligned}
\frac{\partial f\left(p_1, p_2\right)}{\partial p_2} & =\frac{\left(1-4 p_1-2 p_2\right)\left(4 p_1+2 p_2-2\right)^2}{N\left(4 p_1+2 p_2-2\right)^4}- \\
& \frac{4\left(4 p_1+p_2-4 p_1^2-p_2^2-4 p_1 p_2\right)\left(4 p_1+2 p_2-2\right)}{N\left(4 p_1+2 p_2-2\right)^4} \\
& =\frac{\left(p_1+\frac{1-p_2}{2}\right)\left(p_1-\frac{1-p_2}{2}\right)}{N\left(2 p_1+p_2-1\right)^4} \\
& \neq 0.
\end{aligned}
\end{equation*}

the function $f(p_1,p_2)$ has no extremum. The maximum or minimum points must be boundary points.

Since

\begin{equation*}
\frac{1}{e^{\varepsilon}} p_1 \leq p_3=1-p_1-p_2 \leq e^{\varepsilon} p_1
\end{equation*}

thus

\begin{equation*}
\frac{1-p_2}{e^{\varepsilon}+1} \leq p_1 \leq \frac{e^{\varepsilon}}{e^{\varepsilon}+1}(1-p_2).
\end{equation*}

Because the function for the two bounding points

\begin{equation*}
\begin{aligned}
\left.f\left(p_1, p_2\right)\right|_{p_1=\frac{1-p_2}{e^{\varepsilon}+1}} & =\frac{4 \frac{1-p_2}{e^{\varepsilon}+1}+p_2-4\left(\frac{1-p_2}{e^{\varepsilon}+1}\right)^2-p_2^2-4 \frac{1-p_2}{e^{\varepsilon}+1} p_2}{N\left[4 \frac{1-p_2}{e^{\varepsilon}+1}+2 p_2-2\right]^2} \\
& =\frac{1}{4 N}\left[\frac{\left(e^{\varepsilon}+1\right)^2}{\left(e^{\varepsilon}-1\right)^2\left(1-p_2\right)}-1\right] \\
\end{aligned}
\end{equation*}

\begin{equation*}
\begin{aligned}
f(p_1, p_2)|_{p_1=\frac{e^{\varepsilon}(1-p_2)}{e^{\varepsilon}+1}} & ={\frac{4 \frac{e^{\varepsilon}(1-p_2)}{e^{\varepsilon}+1}+p_2-4(\frac{e^{\varepsilon}(1-p_2)}{e^{\varepsilon}+1})^2}{N[\frac{e^{\varepsilon}}{e^{\varepsilon}+1}(1-p_2)+2 p_2-2]^2}-}\\
& {\frac{p_2^2+4 \frac{e^{\varepsilon}}{e^{\varepsilon}+1}(1-p_2) p_2}{N[\frac{e^{\varepsilon}}{e^{\varepsilon}+1}(1-p_2)+2 p_2-2]^2}}. \\
& =\frac{1}{4 N}[\frac{(e^{\varepsilon}+1)^2}{(e^{\varepsilon}-1)^2(1-p_2)}-1]
\end{aligned}
\end{equation*}

and

\begin{equation*}
f\left(p_1, p_2\right)|_{p_1=\frac{1-p_2}{2}} \rightarrow \infty.
\end{equation*}

Therefore

\begin{equation*}
\min f\left(p_1, p_2\right)=   f\left(\frac{1-p_2}{e^{\varepsilon}+1}, p_2\right)=f\left(\frac{e^{\varepsilon}(1-p_2)}{e^{\varepsilon}+1}, p_2\right).
\end{equation*}

So
\begin{equation*}
\mathbf{Var} \hat{\pi_A}=\frac{1}{4 N}\left[\frac{\left(e^{\varepsilon}+1\right)^2}{\left(e^{\varepsilon}-1\right)^2\left(1-p_2\right)}-1\right]. \qedhere
\end{equation*}
\end{proof}

\begin{acks}
This research has been supported by National Key R\&D Program of China (2022YFB4501500, 2022YFB4501504) and Zhejiang Lab (2022NF0AC01). Any opinions, findings and conclusions or recommendations expressed in this material are those of the authors and do not reflect the views of the funding agencies.
\end{acks}


\bibliographystyle{ACM-Reference-Format}
\balance
\bibliography{VLDB2023}

\end{document}